\newcommand*\circled[1]{\tikz[baseline=(char.base)]{
		\node[shape=circle,draw,inner sep=0pt] (char) {#1};}}
\newcolumntype{L}[1]{>{\raggedright\arraybackslash}p{#1}}
\newcolumntype{C}[1]{>{\centering\arraybackslash}p{#1}}
\newcolumntype{R}[1]{>{\raggedleft\arraybackslash}p{#1}}
\DeclareMathOperator{\dplus}{+\kern -0.4em+}
\newenvironment{denseitemize}{
	\begin{itemize}[topsep=2pt, partopsep=0pt, leftmargin=1.5em]
		\setlength{\itemsep}{2pt}
		\setlength{\parskip}{0pt}
		\setlength{\parsep}{0pt}
	}{\end{itemize}}
\newenvironment{denseenum}{
	\begin{enumerate}[topsep=2pt, partopsep=0pt, leftmargin=1.5em]
		\setlength{\itemsep}{2pt}
		\setlength{\parskip}{0pt}
		\setlength{\parsep}{0pt}
	}{\end{enumerate}}
\def\ie{{i.e.}}
\def\eg{{e.g.}}
\newcommand{\name}{Oobleck\xspace}
\newcommand{\parabf}[1]{\medskip\noindent\textbf{#1}}
\begin{document}

\title{\name: Resilient Distributed Training of Large Models Using Pipeline Templates}

\author{Insu Jang}
\email{insujang@umich.edu}
\orcid{0009-0007-5206-2333}
\affiliation{%
	\institution{University of Michigan}
	\country{}
}
\author{Zhenning Yang}
\email{znyang@umich.edu}
\orcid{0009-0003-0813-5911}
\affiliation{%
	\institution{University of Michigan}
	\country{}
}
\author{Zhen Zhang}
\email{zhzhn@amazon.com}
\orcid{0000-0002-0164-0849}
\affiliation{%
	\institution{Amazon Web Services}
	\country{}
}
\author{Xin Jin}
\email{xinjinpku@pku.edu.cn}
\orcid{0000-0001-8741-5847}
\affiliation{%
	\institution{Peking University}
	\country{}
}
\author{Mosharaf Chowdhury}
\email{mosharaf@umich.edu}
\orcid{0000-0003-0884-6740}
\affiliation{%
	\institution{University of Michigan}
	\country{}
}

\begin{CCSXML}
	<ccs2012>
	   <concept>
		   <concept_id>10010520.10010575</concept_id>
		   <concept_desc>Computer systems organization~Dependable and fault-tolerant systems and networks</concept_desc>
		   <concept_significance>500</concept_significance>
		   </concept>
	   <concept>
		   <concept_id>10010520.10010521.10010537</concept_id>
		   <concept_desc>Computer systems organization~Distributed architectures</concept_desc>
		   <concept_significance>500</concept_significance>
		   </concept>
	   <concept>
		   <concept_id>10010520.10010521.10010542.10010294</concept_id>
		   <concept_desc>Computer systems organization~Neural networks</concept_desc>
		   <concept_significance>500</concept_significance>
		   </concept>
	 </ccs2012>
\end{CCSXML}

\ccsdesc[500]{Computer systems organization~Dependable and fault-tolerant systems and networks}
\ccsdesc[500]{Computer systems organization~Distributed architectures}
\ccsdesc[500]{Computer systems organization~Neural networks}

\keywords{Fault tolerant training, distributed training, hybrid parallelism, pipeline template}

\begin{abstract}

\name enables resilient distributed training of large DNN models with guaranteed fault tolerance.
It takes a planning-execution co-design approach, where it first generates a set of heterogeneous \emph{pipeline templates} and instantiates at least $f+1$ logically equivalent pipeline replicas to tolerate any $f$ simultaneous failures.
During execution, it relies on already-replicated model states across the replicas to provide fast recovery.
\name provably guarantees that some combination of the initially created pipeline templates can be used to cover all available resources after $f$ or fewer simultaneous failures, thereby avoiding resource idling at all times. 
Evaluation on large DNN models with billions of parameters shows that \name provides consistently high throughput, and it outperforms state-of-the-art fault tolerance solutions like Bamboo and Varuna by up to $29.6\times$.
\end{abstract}

\maketitle

\section{Introduction}
\label{sec:introduction}

DNN models continue to become larger \cite{2022modelsize}.
Many recent advances in deep learning have been attributed to significant increases in model size to hundreds of billions of parameters and training on ever-growing datasets \cite{2020gpt3, 2023gpt4, 2022turingnlg, 2019dlrm}.
Recent studies suggest that a trillion-parameter model would require at least 2TB of memory simply to store model parameters, and tens or hundreds of TB for training~\cite{2020zero, 2021zerooffload, 2021zeroinfinity, 2022persia, 2022whale}. 
Naturally, scaling large model training has received intense attention over the past few years \cite{2019pipedream, 2021dapple, 2022turingnlg, 2022opt, 2022bloom}.
Distributed \emph{hybrid-parallel training} \cite{2021megatron,2022alpa} that combines model and data parallelism has emerged as the primary approach to training such large models.

Unfortunately, the likelihood of experiencing failures also increases with the scale and duration of training \cite{2019philly, 2017failuresinlargescale, 2022mlaaswild}.
The effect is further amplified by the synchronous nature of DNN training, which causes all participating devices to idle until the failed one has recovered, causing massive underutilization.
Indeed, teams from Meta, HuggingFace, and LAION report significant underutilization from failures when training large models \cite{2022opt, 2022bloom, 2022laion}.
Failure rates are even higher for training jobs that use spot instances in the cloud \cite{2023bamboo, 2022varuna}.

Existing frameworks have little systematic support for fault tolerance during hybrid-parallel training.
Ensuring continuous operation in the presence of failures fundamentally requires redundancy in one form or another. 
Model state redundancy in data-parallel training is the only form of ``free'' redundancy, because each worker already has a copy of the model states.
Most solutions harness the inherent redundancy provided by data parallelism and utilize its embarrassingly parallel nature to elastically change the number of GPUs while dynamically changing the global batch size \cite{2021coddl, torchelastic,2022aryl,2023ecrec}. 
However, they are unable to extend these benefits to hybrid parallelism and are limited only to data-parallel training.

In contrast, fault tolerance approaches tailored toward hybrid parallelism struggle to leverage any inherent redundancy.
Instead, they introduce additional redundancy in various forms; \eg, having a pool of standby GPUs \cite{2022bloom}, using checkpoints to reconfigure and restart \cite{2022varuna}, and performing redundant computations in anticipation of a possible failure \cite{2023bamboo}.
Essentially, they consider overhead during training vs. overhead to recover from failure(s), and choose one of the two extremes (\S\ref{sec:backgroundfaulttolerance}): if failures are infrequent, amortized overhead for reconfiguration would also be low \cite{2022varuna}; if failures are more frequent, then incurring some overhead during training may be more preferable than spending significant time in recovery \cite{2023bamboo}.

In this paper, we present \name, a fault-tolerant hybrid-parallel training framework.
It provides high training throughput, guaranteed fault tolerance, and fast recovery \emph{without} introducing additional overhead.
%
\emph{Pipeline templates} are at the core of \name's design.
A pipeline template is a specification of pipeline execution for a given number of nodes.
They are designed during the planning phase by \name's template generator and reused during execution by \name's execution engine.
All pipeline templates are logically equivalent yet physically heterogeneous; each has a different number of nodes and associated configurations that can be used to instantiate a pipeline for a given model.
\name uses one or more pipeline templates to create pipeline replicas to exploit the inherent model states redundancy across the replicas.
Pipelines affected by failures can reconstruct model states by copying missing layers from other replicas without having to restart from a checkpoint.

More specifically, given a training job starting with the number of maximum simultaneous failures to tolerate $f$,
\name's execution engine instantiates at least $f+1$ heterogeneous pipelines from the generated templates.
The fixed global batch is distributed proportionally to the computing capability of heterogeneous pipelines such that all pipeline replicas train roughly at the same rate.
Upon failures, \name avoids demanding analysis of finding a new optimal configuration by simply reinstantiating pipelines from the precomputed pipeline templates while achieving maximum node utilization.
This is always possible for $f$ or fewer failures because \name provably guarantees that a combination of pipelines generated from those precomputed templates can fully utilize all the remaining nodes.


We have implemented \name on top of PyTorch and HuggingFace Transformers~\cite{2022hftransformer} using components from DeepSpeed~\cite{2020deepspeed} and Merak~\cite{2023merak}.
We evaluate \name and compare its performance against Bamboo and Varuna across large models like GPT-3 with billions of parameters.
\name outperforms the state-of-the-art solutions by up to $29.6 \times$ as we consider different frequencies of failures, spot instance traces, and models of different sizes and computation complexity.

Overall, we make the following contributions in this paper.

\begin{denseitemize}
    \item We present \name, a novel framework for resilient distributed training that provides guaranteed fault tolerance and maximizes throughput.
    
    \item \name introduces pipeline templates to (re)instantiate pipelines.
        Pipeline templates allow quick failure recovery and utilization of all available GPUs.

    \item We implement and evaluate \name with several large models, \eg, variants of GPT-3, to demonstrate large improvements in terms of throughput and failure recovery.
\end{denseitemize}

\name is open-source and available on GitHub.\footnote{\url{https://github.com/SymbioticLab/Oobleck}}

\section{Background and Motivation}
\label{sec:background}

In this section, we briefly introduce hybrid parallelism that is commonly used for large model training.
We also discuss existing fault tolerance strategies for hybrid-parallel training and highlight their limitations.

\subsection{Hybrid Parallelism}
\label{sec:hybridparallelism}

As DNN models continue to grow in size and are trained on increasingly larger datasets \cite{2022opt, 2022bloom, 2022turingnlg}, using just \emph{data parallelism} or \emph{model parallelism} is often not enough to efficiently train a DNN.
Data parallelism splits and distributes input to multiple GPUs, but it requires each GPU to hold the entire model \cite{2021coddl}.
Model parallelism accommodates large model training by splitting the model across multiple GPUs --  for example, \emph{pipeline parallelism} splits the model into groups of layers called stages, and \emph{tensor parallelism} slices each model layer into several tensor chunks.
However, the former has pipeline bubble overheads that decrease compute utilization as the pipeline grows deeper~\cite{2022turingnlg}.
The latter suffers from high communication cost that cannot be hidden in computation, because it requires several all-reduce operations in the critical path of both forward pass and back-propagation~\cite{2020zero}.
Consequently, a combination of data and model parallelism techniques -- aka \emph{hybrid parallelism} -- is used in practice to train large DNN models \cite{2022turingnlg, 2022bloom, 2021dapple, 2021megatron, 2019pipedream}.

\subsection{Fault Tolerance in Distributed Training}
\label{sec:backgroundfaulttolerance}

Failures are the norm in distributed systems, and distributed DNN training is no exception. 
The probability of experiencing one or more failures increases with the increasing number of GPUs and the duration of training.
For instance, a Meta AI team suffered approximately 100+ hardware failures and had to do 100+ major restarts during OPT-175B training~\cite{2022opt}.
Because of the synchronous nature of distributed training, the cost of even one failure is \emph{multiplied}: all the GPUs must idle until the impact of failure has been mitigated. 
The impact of this phenomenon was recently highlighted in detail by a LAION team when training CLIP models \cite{2022laion} as well as a BigScience team during BLOOM training \cite{2022bloom}.

Several recent works have focused on fault-tolerant data-parallel training via dynamically changing the global batch size \cite{2021coddl,torchelastic, 2020elan, 2022aryl}. 
Fault-tolerant hybrid-parallel training is more challenging because the model is distributed across multiple GPUs. 
There are two primary approaches. 
\begin{denseenum}
    \item \emph{Checkpointing:} Checkpointing is a popular mechanism to persistently store training progress.
    For example, the BigScience team training the BLOOM model \cite{2022bloom} and the Meta AI team training the OPT model \cite{2022opt} used this recently. 
    However, manual reconfiguration after identifying and replacing the failed GPUs with spare ones is time-consuming.
    Varuna \cite{2022varuna} introduced job morphing to dynamically reconfigure training jobs to achieve the best performance with the remaining resources after restarting from the most recent checkpoint.
    While this does not introduce significant fixed overhead, recovery time can be unsustainable when the failure rate is high \cite{2023bamboo}.
    
    \item \emph{Redundant computation (RC):} To avoid reconfiguration and restart overheads, Bamboo \cite{2023bamboo} recently introduced redundant computation (RC) where each pipeline stage is redundantly computed in two subsequent nodes.
    When a node fails, the backup node computes the forward and backward passes of the failed node.
    RC introduces fixed computational overhead due to redundancy in computing and memory overhead of holding redundant states in each node.
    Note that reconfiguration and restart from a checkpoint is still necessary if two adjacent nodes fail.
    
\end{denseenum}

\subsection{Limitations of the State-of-the-Art}
\label{sec:limitations-of-sota}

\begin{figure}
    \centering
    \begin{subfigure}[t]{0.555\linewidth}
        \centering
        \includegraphics[width=\textwidth]{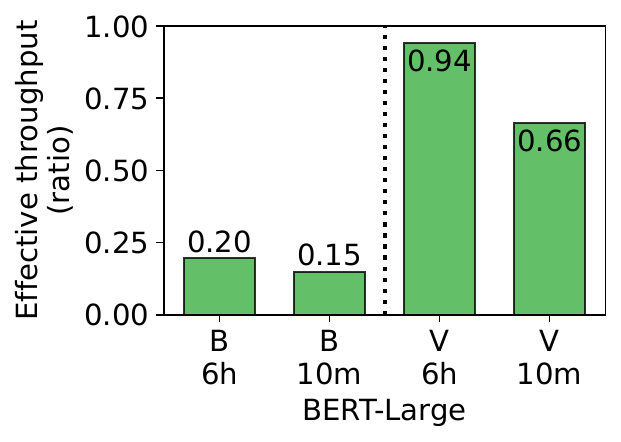}
    \end{subfigure}%
    \begin{subfigure}[t]{0.445\linewidth}
        \includegraphics[width=\linewidth]{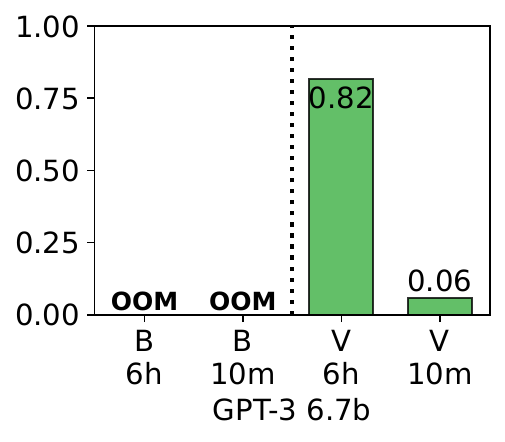}  
    \end{subfigure}
    \caption{Effective time spent in training for Bamboo (B) and Varuna (V) running BERT-large and GPT-3 6.7b models for different frequency of failures (6h and 10m).
        An optimistic upper-bound of the optimal is 1.00, when training remains unaffected by failure(s).
    }
    \label{fig:bamboo_varuna_tput}
\end{figure}

State-of-the-art approaches for fault-tolerant hybrid-parallel training do not provide any systematic fault tolerance guarantees, and they have large overheads, especially when models become larger. 

Figure~\ref{fig:bamboo_varuna_tput} shows the time effectively spent in training (\ie, the time that leads to training throughput) using Varuna and Bamboo when training BERT-Large and GPT3-6.7B -- with 340 million and 6.7 billion parameters, respectively -- when one failure happens every six hours and every 10 minutes on average. 
Detailed experimental setup is in Section~\ref{sec:exp_setup}.
Varuna provides higher training throughput compared to Bamboo, but it has noticeable job restart overhead.
Although some recent proposals improved checkpointing overhead \cite{2021checkfreq, 2022checknrun}, loading checkpoints upon restarts is still in the critical path.
When failures become more frequent, restarting overheads dominate.
Additionally, performance degradation after a failure is not proportional to the number of failures, but worse.
This is because Varuna's hybrid parallelism uses a grid topology; one GPU failure breaks the grid of GPUs, leaving some of them idle.

Bamboo, in contrast, reduces checkpointing and restart overheads, but RC in Bamboo introduces significant performance overhead, even when some portion of the overhead is hidden in pipeline bubbles.
Specifically, its forward RC redundantly computes the next stage all the time, lowering throughput even in the absence of failures.
Backward RC takes place only after failure(s), but it adds additional overhead to some pipelines' iteration times, making them stragglers and inflating the iteration time of synchronous training.
Worse, Bamboo also needs to restart with a full reconfiguration from a checkpoint for as few as two failures when two adjacent nodes fail.

Finally, both approaches perform poorly for larger models, especially when failures are frequent.
Bamboo runs out of memory and Varuna spends most of the time preparing to train.
We aim to design a solution that works well regardless of the frequency of failures both in terms of the fault tolerance guarantee it provides and the throughput it achieves. 

\section{\name Overview}
\label{sec:overview}

\name is a resilient distributed training platform for large models with guaranteed fault tolerance.
It presents the concept of \textit{pipeline templates} to achieve high throughput and fast fault tolerance at the same time (\S\ref{sec:pipeline_templates}).
Its fault tolerance guarantees allow for reconfiguration without restarts for up to $f$ simultaneous failures in the worst case (\S\ref{sec:guarantee_fault_tolerance}).
We also discuss \name's overall architecture (\S\ref{sec:system_components}) and how it integrates with the training lifecycle (\S\ref{sec:training_lifecycle}).



\subsection{Pipeline Templates}
\label{sec:pipeline_templates}
\name introduces \textit{pipeline templates}, each of which is a pipeline specification that defines how many nodes should be assigned to a pipeline, how many stages to create, and how to map model layers in stages to GPUs.
All pipelines \emph{instantiated} by \name are from precomputed pipeline templates.
In practice, \name instantiates multiple (possibly heterogeneous) pipelines from a set of heterogeneous pipeline templates to fully utilize an arbitrary number of nodes even when they do not form a grid.
Decoupling ``planning'' (pipeline template generation) from ``execution'' (pipeline instantiation) enables fast failure recovery; a pipeline with lost node(s) is replaced with a new pipeline instantiated from another pipeline template that requires a fewer number of nodes.


\begin{figure}[!t]
    \centering
    \subfloat{\includegraphics[width=\linewidth]{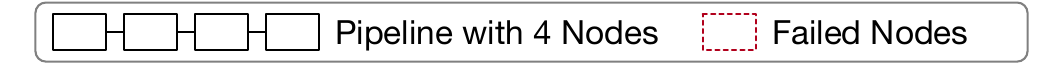}}
    \vspace{5pt}

    \setcounter{subfigure}{0}
    \subfloat[Failures happen to nodes with the same stage (S1 lost, not recoverable)]{\includegraphics[width=0.48\linewidth]{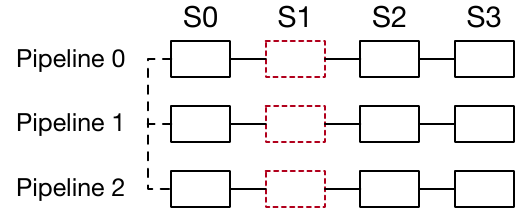}\label{fig:fault_tolerance_guarantee_a}}
    \hspace{0.01\linewidth}
    \subfloat[Failures in random places (all stages alive, recoverable)]{\includegraphics[width=0.48\linewidth]{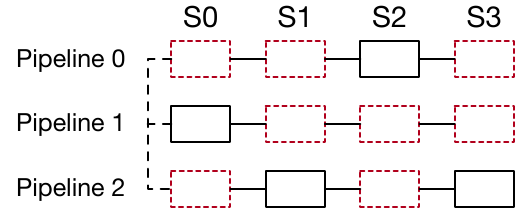}\label{fig:fault_tolerance_guarantee_b}}
    
    \caption{An example of \name's fault tolerance guarantees with $f=2$. S refers to a pipeline stage.
    (a) In the worst case, we lose model states (a stage) if more than $f$ nodes fail.
    (b) In the general case, however, more than $f$ node failures can be tolerated.}
    \label{fig:fault_tolerance_guarantee}
\end{figure}

\subsection{Fault Tolerance Guarantees}
\label{sec:guarantee_fault_tolerance}
\name guarantees fault tolerance without restart for up to $f$ simultaneous \emph{pipeline} failures, because in the worst case, $f$ node failures are enough to cause $f$ pipelines to fail.
Consider Figure~\ref{fig:fault_tolerance_guarantee}, where there are three pipeline replicas each with four stages -- \ie, each stage has three replicas.
We can tolerate at most two simultaneous node failures in the worst case, because if three failures take out all three replicas of any stage (stage 1 in Figure~\ref{fig:fault_tolerance_guarantee_a}), the pipelines cannot be recovered.
In general, however, \name can tolerate in excess of $f$ node failures, provided that a minimum of one copy of the entire model states is retained across the pipelines.
For example, even after eight node failures in Figure~\ref{fig:fault_tolerance_guarantee_b}, one copy of each stage still remains alive; hence, it is recoverable.




\subsection{System Components}
\label{sec:system_components}
\name extends existing ML training frameworks in two primary aspects (Figure~\ref{fig:overview}).
First, it has \textit{a pipeline template generator} to generate a set of heterogeneous pipeline templates that can be used by the execution engine for pipeline instantiation.
Pipeline templates are created only once and never change during the entire training.

Second, \name has a \textit{distributed execution engine} that enables efficient heterogeneous pipeline execution.
It instantiates pipelines from the given set of pipeline templates considering the user's fault tolerance threshold ($f$) and batch information (global batch and microbatch size).
It creates at least $f+1$ (possibly heterogeneous) pipeline replicas so that at least one copy of the model exists anytime during training for up to $f$ simultaneous failures.
The batch distributor calculates the number of microbatches for each pipeline that balances execution latency between heterogeneous pipelines.
Pipeline instantiation and batch distribution happens whenever a node fails or is added. 
The node change monitor detects node failure(s) and node additions; then the execution engine dynamically reconfigures using precomputed pipeline templates.

\begin{figure}[!t]
    \includegraphics[width=\linewidth]{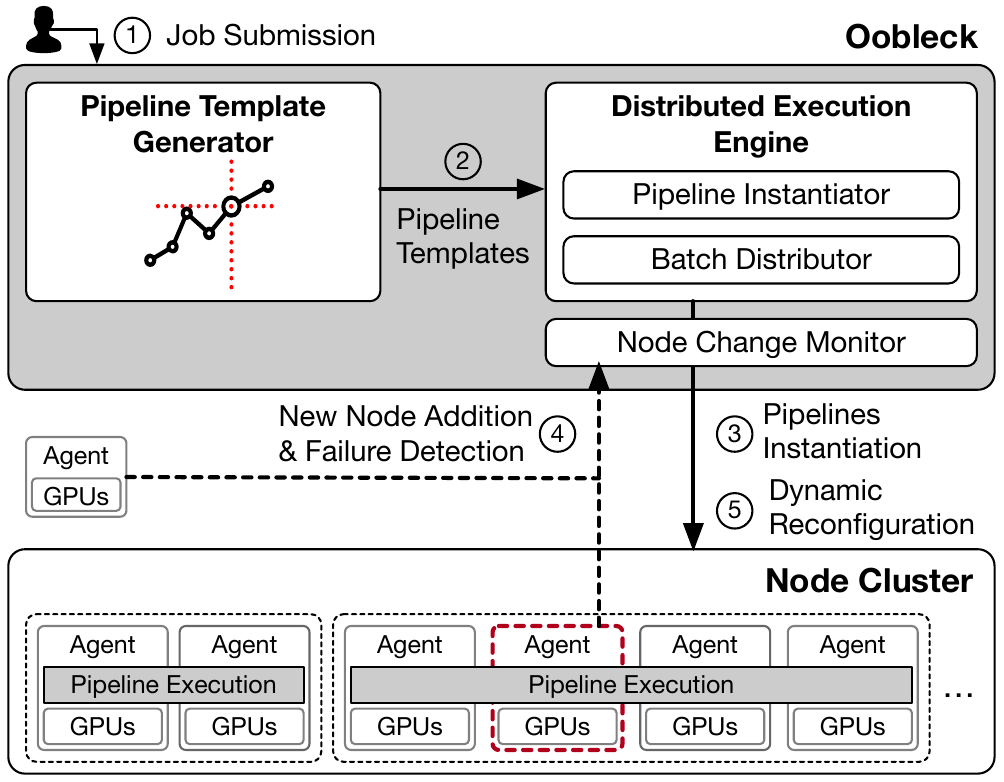}
    \caption{\name~system overview.}
    \label{fig:overview}
\end{figure}

\subsection{Training Lifecycle}
\label{sec:training_lifecycle}
\name users submit training jobs with a fault tolerance threshold $f$, a model and dataset to train on $N$ homogeneous nodes (received from a GPU cluster manager \cite{2019philly, 2022mlaaswild}), and batch size information \circled{1} (Figure~\ref{fig:overview}).
\name's pipeline template generator first creates a set of pipeline templates \circled{2}.
The distributed execution engine instantiates pipelines from the templates and deploys them on the cluster \circled{3}.

When node failure(s) happen \circled{4}, if we have a complete model replica, \name does not restart but reconfigures the pipelines \circled{5}.
The execution engine reinstantiates pipelines from the templates to make sure all nodes are used.
During pipeline reinstantiation, nodes share information about the ownership of model states and copy missing model states from others.
After reconfiguration and model states copying are done, nodes resume training.
A job runs until it reaches the target accuracy, a user terminates it, or \name cannot maintain $f+1$ pipeline replicas.
If the cluster cannot hold $f+1$ replicas, \name stores the progress, informs the user, and exits.
Thereafter, the user can decide to restart training from a recent checkpoint once enough nodes have recovered to maintain $f+1$ replicas.


\begin{figure*}

  \subfloat[Generating pipeline templates (\S\ref{sec:pipeline_template})]{\includegraphics[width=0.5\textwidth]{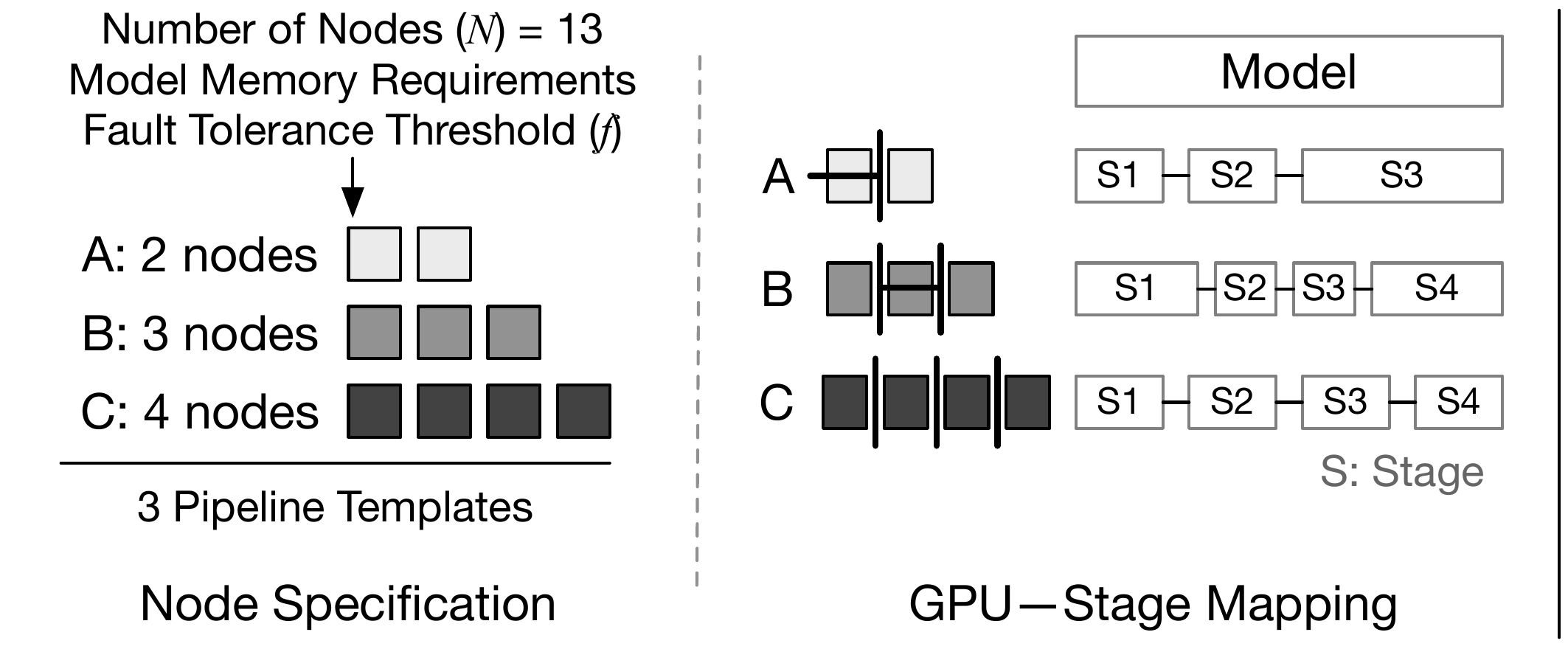}\label{fig:planning_overview_a}}
  \subfloat[Pipeline instantiation (\S\ref{sec:pipeline_instantiation})]{\includegraphics[width=0.5\textwidth]{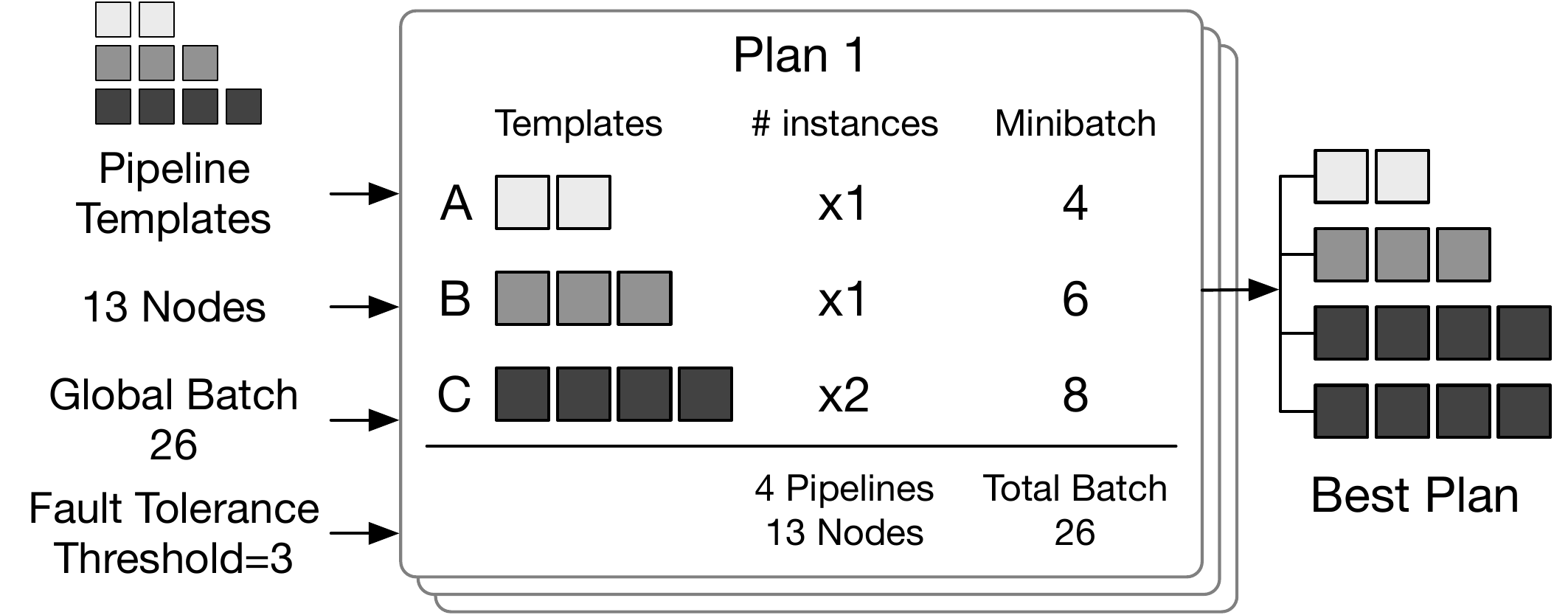}\label{fig:planning_overview_b}}

  \caption{\name's planning algorithm overview. 
    First, it generates a set of pipeline templates, a combination of which can utilize all available nodes.
    A template is a specification of pipelines, how many nodes are assigned and how GPUs in the nodes should be mapped to pipeline stages.
    Then, pipelines are instantiated following the fastest (best) plan after checking all possible plans.
    A plan includes how many pipelines should be instantiated from each pipeline template given the number of nodes and how batch size should be distributed to the pipelines.}
  \label{fig:planning_overview}
\end{figure*}

\section{\name Planning Algorithm}
\label{sec:planning}

\name tolerates $f$ simultaneous failures by instantiating $r$ ($\ge f+1$) heterogeneous pipeline replicas of the same model.
Each of these logically equivalent pipeline replicas performs hybrid-parallel training.
Unlike existing solutions that force a single homogeneous hybrid-parallel configuration over a rigid grid (\# GPUs per pipeline stage $\times$ \# pipeline stages $\times$ \# pipeline replicas) \cite{2022varuna}, \name's \textit{heterogeneous pipeline execution} can utilize all available GPUs.

Because the number of available nodes can vary over time due to failures, \name requires an effective mechanism to derive \emph{all} possible configurations of heterogeneous pipelines that can utilize all available GPUs at any point in time.
\name's pipeline template generator computes a fixed set of pipeline templates at the beginning of the training job for the entire training (\S\ref{sec:pipeline_template}).
The pipeline execution engine instantiates zero or more copies of each of the templates (\ie, a collection of heterogeneous pipeline replicas) to utilize all currently available nodes (\S\ref{sec:pipeline_instantiation}).

\subsection{Generating Pipeline Templates}
\label{sec:pipeline_template}
Each pipeline template created by \name is a set of specifications that defines how many nodes to use, and how the given GPUs and model layers are mapped to make pipeline stages use all those nodes.
Figure~\ref{fig:planning_overview_a} illustrates this process.
In this example, we generate a set of pipeline templates.
We first determine the number of heterogeneous pipeline templates and their node specifications (number of nodes) needed to utilize all available nodes, given the initial number of nodes $N$, the amount of memory required to train a model, and the fault tolerance threshold $f$ (\S\ref{sec:node_specification}).
In this case, three heterogeneous pipeline templates with 2, 3, and 4 nodes have been chosen.
Then, for each template, we partition the model and map the available GPUs to them to create pipeline stages that minimize the iteration time (\S\ref{sec:dc_algorithm}).

\subsubsection{Node Specification.}
\label{sec:node_specification}
Because pipeline templates never change during training and generating arbitrarily many of them is expensive, we must determine how many pipeline templates are needed, and then how many nodes each of the templates should use, so that some combination of them can always utilize any number of available nodes, even when we have fewer number of nodes than at the beginning after failures.

This can be formulated as the Frobenius problem \cite{2005frobenius}, which finds the Frobenius number $g$, the largest number that cannot be represented as a linear combination of integers.
Meaning, any number of available nodes after failures $N' > g$ can be expressed as a linear combination of the given pipeline templates, each with a specified integer number of nodes.

If we represent the number of pipeline templates as $p$ and the number of nodes for the $i$-th pipeline template be ordered values $n_i (0 < n_i \le N)$ where $n_i < n_{i+1}$, we can guarantee that any feasible $N' \ge (f+1) n_0$ is always larger than $g$ when the following conditions are met \cite{1956frobenius}. 
\begin{denseenum}
  \item $p > n_0 - 1$.
  \item $n_i$ are consecutive integers ($n_i+1=n_{i+1}$).
\end{denseenum}
See Appendix~\ref{sec:apdx_proof_frobenius} for a proof.

We set the lower bound of $N'$ as $(f+1)n_0$: the smallest number of nodes required to maintain $f+1$ replicas of the model, because $n_0$ is the smallest number of nodes for a single pipeline.
Any smaller $N'$ cannot respect the fault tolerance threshold $f$.

\parabf{Choice of $n_0$ and $p$.}
There are several choices for a set of pipeline templates that satisfy the conditions depending on the values of $n_0$ and $p$.
We choose the smallest possible $n_0$ and the largest $p$.
We select the minimum $n_0$ because shallow pipeline execution (smaller $n_0$) typically takes less time for the same amount of computation~\cite{2022turingnlg}.
Although a large $p$ does not directly benefit planning, it helps reduce reconfiguration overhead.
The largest $p$ can be calculated from the largest possible value for $n_{p-1}$ (referred to as $n^{\max}_{p-1}$).
When all but one of the $f+1$ replicas use $n_0$ nodes and the last one uses all the remaining nodes, $n^{\max}_{p-1} = N - fn_0$.
We now have $p$ to be the length of the range from $n_0$ to $n^{\max}_{p-1}$.

\subsubsection{GPU--Stage Mapping.}
\label{sec:dc_algorithm}
Given the number of nodes in each pipeline template, we must determine how to best use them by finding the number of pipeline stages, partitioning the model layers to the stages, and mapping the nodes to those stages.
%
We propose a divide and conquer algorithm to find the mapping that minimizes the iteration time.
This algorithm divides the model into pipeline stages and the nodes into a set of GPUs at the same time, and then maps each of them so that we utilize all GPUs in the given nodes.
It then iterates over all possible combinations of GPU--stage mapping and finds the one that minimizes the iteration time.

\begin{figure}
  \includegraphics[width=\linewidth]{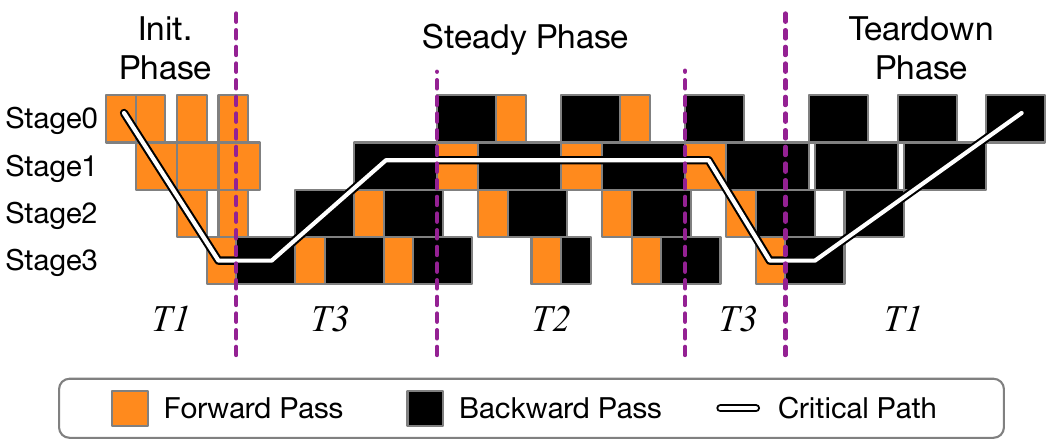}
  \caption{1F1B pipeline execution breakdown ($T1, T2, T3$)}
  \label{fig:pipeline_breakdown}
\end{figure}

Let $T(S', u, v, d)$ be the minimum iteration time for layers ($l_u, l_{u+1}, \dots, l_{v-1}$) partitioned into $S'$ stages and running on $d$ GPUs.
A pipeline for the entire model using all GPUs in the pipeline template then has the minimum iteration time $T(S, 0, L, n \cdot M)$, where the model has $L$ layers, and there are $n$ number of nodes in the pipeline template, each of which has $M$ GPUs.

To calculate the minimum iteration time of a pipeline, the algorithm considers its critical path and breaks $T$ down into three terms $T1$, $T2$, and $T3$ (Figure~\ref{fig:pipeline_breakdown}).
$T1$ represents 1F1B initialization and teardown phases of pipeline execution, which include one forward and one backward for all stages.
The steady phase in the middle has one forward and one backward pass alternating.
The critical path may still include forward and backward passes of other stages than the slowest one on each end, similar to $T1$.
We thus split the steady phase into $T2$ and $T3$.
$T2$ includes the slowest stage alternating forward and backward, and $T3$ is the remaining part.

\parabf{Divide.}
In the division phase, we divide the model and the nodes at the same time.
Division continues until we cannot partition either GPUs or model layers, or the number of partitions matches the desired number of stages.
If multiple GPUs are assigned to a pipeline stage, tensor parallelism is used to accelerate it.
Figure~\ref{fig:dc_algorithm} illustrates such a division and mapping process.
After both sub-problems are conquered, the algorithm combines their results to calculate the execution time of a multi-stage pipeline created by connecting two sub-problems.
From the definitions of $T1$, $T2$, and $T3$, the division and combination process can be defined as recursive structures for each term:
\begin{flalign}
  \begin{split}
    &T1_{s, k, m}(S', u, v, d) \\
    &=T1_{s, k, m}(s, u, k, m) + T1_{s, k, m}(S'-s, k+1, v, d-m)
  \end{split}&& \label{eq:T1dc}
\end{flalign}
    
\begin{flalign}
  \begin{split}
    &T2_{s, k, m}(S', u, v, d~~\vert~~k^*) \\
    &= (N_b - S' + k^* - 1)(F_{s_{k^*,m}} + B_{s_{k^*,m}})
  \end{split}&& \label{eq:T2dc}
\end{flalign}
    
\begin{flalign}
  \begin{split}
    &T3_{s, k, m}(S', u, v, d~~\vert~~k^*)\\
    &= \begin{cases}
      \begin{aligned}
         &T3_{s, k, m}(s, u, k, m~~\vert~~k_1^*) \\ 
        +&T1_{s, k, m}(S'-s, k+1, v, d-m)
      \end{aligned} & \text{if } k^* == k_1^* \\[1em]
      T3_{s, k, m}(S'-s, k+1, v, d-m~~\vert~~k_2^*) & \text{else}
    \end{cases}
  \end{split}&& \label{eq:T3dc}
\end{flalign}
We iterate over $s$, $k$, and $m$ globally, and find a ($s, k, m$) that minimizes $T1_{s, k, m}+T2_{s, k, m}+T3_{s, k, m}$.
Each $T1_{s, k, m}$, $T2_{s, k, m}$, and $T3_{s, k, m}$ is the solution of $T1$, $T2$, and $T3$, respectively.

$k^*$ denotes the index of the slowest stage, derived from either $k_1^*$ or $k_2^*$, the slowest stage indices of the two sub-problems.
$T2$ depends on the number of microbatches ($N_b$) deployed to the pipeline, which is not yet determined.
From prior observations that the pipeline bubble overhead is negligible with $N_b \ge 4S'$ \cite{2019gpipe}, we temporarily use $N_b=4S'$ in planning.
The structure of $T3$ is special as it includes $T1$ in Equation~\ref{eq:T3dc}.
T3 is an accumulation of forward and backward time for all the following stages after the slowest ($\sum_{k=k^*}^{S'-1} (F + B)$).
If $s_{k^*}$ is in the first half sub-problem (\ie~$s_{k^*} == s_{k_1^*}$), it can be broken down to ${\sum_{k=k^*}^{s}(F+B)}+{\sum_{k=s+1}^{S'-1}(F+B)}$, each of which represents $T3$ of the first half and $T1$ of the second half, respectively.


\begin{figure}
  \includegraphics[width=\linewidth]{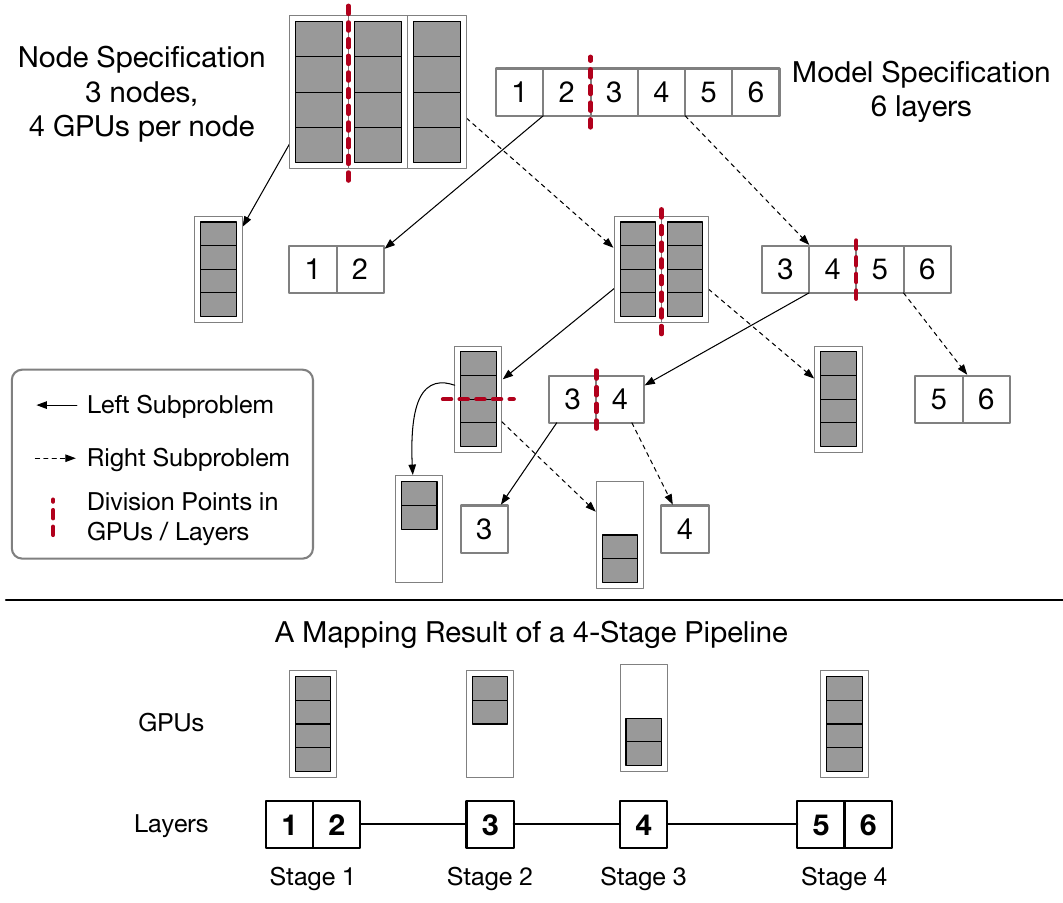}
  \caption{A toy example of division process for a 4-stage pipeline template with 3 nodes (template B in Figure~\ref{fig:planning_overview}) and a model with 6 layers.
  The model and the GPUs in nodes are divided into two sub-problems together. When division is done, each group of partitioned GPUs and layers form a stage.
  The algorithm iterates all combinations of layer partitioning and GPU partitioning to find the minimum $T$.}
  \label{fig:dc_algorithm}
\end{figure}

\parabf{Conquer.}
When a problem has just one stage, we can easily calculate the execution time of running a stage $s$ with $l_u, \dots, l_{v-1}$ layers on $d$ GPUs:
\begin{equation}
\begin{aligned}
T1(1, u, v, d) =& F_{s, d} + B_{s, d} = \sum_{k=u}^{v-1} ( F_{l_k, d} + B_{l_k, d} ) \\
T2(1, u, v, d) =& 2 ( F_{s, d} + B_{s, d} ) \\
T3(1, u, v, d) =& F_{s, d} + B_{s, d}
\end{aligned}
\label{eq:dc_init}
\end{equation}
There is one requirement for $d$ GPUs running a single stage: \emph{all $d$ GPUs should be in the same node}.
It is reasonable because if GPUs span several nodes, the cross-node network becomes a bottleneck and lowers the utilization of high-throughput intra-node network in collective communications done in intra-layer parallel execution.
We simply mark all $T$ values as $\infty$ if cross-node GPUs are given.

\parabf{Choosing the number of stages $S$.}
We do not know which $S$ provides the minimum iteration time.
Therefore, we iterate over possible values of $S$ in $(n, n+1, \dots, L)$.
Because we partition the model at layer granularity, the number of stages cannot exceed the number of layers $L$.
The minimum is derived from the constraint that a single stage cannot be assigned to two or more nodes.
If $S$ becomes less than $n$, it breaks the constraint and some stages should have at least two nodes assigned according to the pigeonhole principle.

\parabf{Time complexity of the naive implementation.}
The recursive stage division happens in $O(L)$.
For every division, stages and layers are partitioned and they are assigned to two device sub-clusters.
Stage and layer partitioning have $O(L)$ choices.
Partitioning nodes is done in $O(n)$, but GPUs within a single node can further be partitioned, adding $O(M)$.
Time complexity of layer partitioning and device assignment for a given number of stage is $O(LnM)$.
Divide and conquer happens for each feasible $S$, and iterating over $S$ values is $O(L-n)$.
Therefore, the overall algorithm time complexity per pipeline template is $O\left((L-n)L^3nM\right)$.

\parabf{Using memoization to reduce complexity.}
We cache all intermediate results to accelerate the divide and conquer algorithm.
It boosts not only getting the mapping of one pipeline template but also helps in deriving the mapping of the other pipeline templates.
In fact, running the algorithm for the largest pipeline template (with $n_{p-1}$ nodes) is enough to calculate intermediate caches required for building the mapping of all the other pipeline templates.
With all intermediate caches present, calculating the mapping of another smaller pipeline template can be done in $O(Ln)$.

\subsection{Pipeline Instantiation}
\label{sec:pipeline_instantiation}
Given a set of pipeline templates, we know by construction (\S\ref{sec:node_specification}) that there exists a combination of them that \name can instantiate to utilize all available nodes.
However, such a combination of heterogeneous pipelines may not be unique.
So we first find all such feasible combinations (\S\ref{sec:enumerating_instantions}).
\name chooses a plan with the highest estimated throughput among all the feasible combinations (\S\ref{sec:batch_distribution}).


\begin{figure}
  \includegraphics[width=\linewidth]{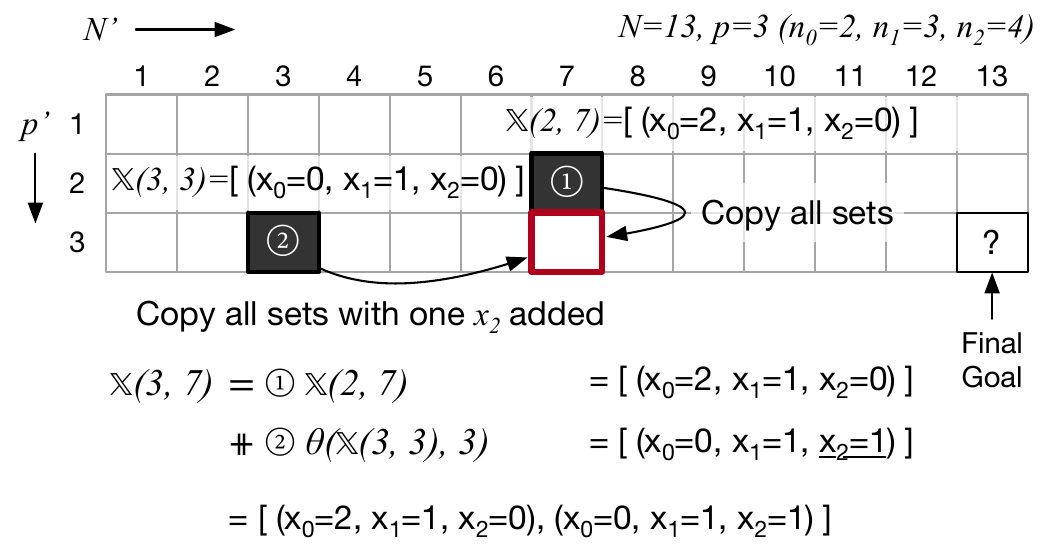}
  \caption{The dynamic programming algorithm finding all list of feasible $\mathbf{X}$s. Underlined $x_2=1$ is added by $\theta()$ function.}
  \label{fig:instantiation_dyp}
\end{figure}

\subsubsection{Enumerating All Instantiation Options.}
\label{sec:enumerating_instantions}
Although we know that we can utilize all available nodes with the set of pipeline templates, the number of pipelines to be instantiated from each pipeline template is undetermined.
Worse, there may be several pipeline configurations that use all nodes from the same pipeline template set.
For example, 13 nodes can be represented as the plan 1 in Figure~\ref{fig:planning_overview_b} $(1 \cdot n_0 + 1\cdot n_1 + 2 \cdot n_2)$, but also as $(0 \cdot n_0 + 3 \cdot n_1 + 1 \cdot n_2)$, and more.
We therefore enumerate all feasible pipeline sets for currently available nodes and pick one that maximizes training throughput.

Let $\mathbb{X}(p, N)$ be a list of all feasible pipeline sets $[\mathbf{X}_0, \mathbf{X}_1, ...]$.
Each $\mathbf{X}_i$ is a set of the number of pipelines to be instantiated $(x_0, x_1, \dots, x_{p-1})$, using $p$ number of heterogeneous pipeline templates with the number of nodes specification ($n_0, n_1, \dots, n_{p-1}$), so that all $N$ nodes are used by pipelines.
A feasible $\mathbf{X}_i$ satisfies the following requirements:
\begin{denseenum}
  \item \label{item:req1} $N = x_0n_0 + x_1n_1 + \dots + x_{p-1}n_{p-1}$ (All nodes are used).
  \item \label{item:req2} $\sum_{j=0}^{p-1}x_j \ge f+1$ (Number of pipelines is at least $f+1$).
\end{denseenum}

We exploit dynamic programming for the coin change problem to find $\mathbb{X}$.
The coin change problem finds a combination of coins that add up to the given amount of money~\cite{coin_change}.
It is an equivalent problem to Requirement~\ref{item:req1} above if we replace denominations of each coin with $n_i$, and the given amount of money with $N$.
We formulate the dynamic programming structure as:
\begin{equation}
  \mathbb{X}(p', N') = \mathbb{X}(p'-1, N') \dplus \theta(\mathbb{X}(p', N'-n_{p'}), p')
  \label{eq:instantiation_dp}
\end{equation}
where $\dplus$ means concatenating two lists, and $\theta(\mathbb{X}, p')$ is a function that increases $x_{p'}$ by 1 in every $\mathbf{X}_i$s in $\mathbb{X}$.

Figure~\ref{fig:instantiation_dyp} shows the execution of the dynamic programming algorithm.
The two terms in Equation~\ref{eq:instantiation_dp} are associated with each black boxes.
$\mathbb{X}$ in the red box should include all $\mathbf{X}_i$s that use all seven nodes in instantiating pipelines using the three different pipeline templates ($n_0=2, n_1=3, n_2=4$).
$\mathbb{X}$ in \circled{1}, all $\mathbf{X}_i$s use seven nodes and are already feasible for $\mathbb{X}(3, 7)$. We just copy them.
$\mathbb{X}$ in \circled{2}, however, only uses three nodes in total.
By adding one four-node pipeline (increasing $x_2$ by 1), all $\mathbf{X}_i$s use seven nodes and become feasible for $\mathbb{X}(3, 7)$.

The dynamic programming is done in $O(Np)$ filling all table elements. 
$\mathbb{X}$ in the bottom-right corner of the table contains all feasible $\mathbf{X}_i$s.
To satisfy Requirement~\ref{item:req2}, we filter the list and obtain sets with $\sum_{j=0}^{p-1}x_j \ge f+1$.

\subsubsection{Calculating Throughput with Batch Distribution.}
\label{sec:batch_distribution}
\name's execution engine needs to choose from several feasible $\mathbf{X}_i$s.
We calculate the overall throughput for each $\mathbf{X}_i$ and choose the one that maximizes the throughput.
To calculate throughput, we need to determine the batch size of each pipeline.
While the global batch size is given by the user, it is \name's responsibility to distribute them across heterogeneous pipelines to maximize overall throughput.
It is crucial to assign work proportional to the amount of computing power of each pipeline; otherwise, the overall throughput will be decreased due to stragglers.
We refer to this as \textit{batch distribution}.
Given the global batch size $B$ and microbatch size $b$, batch distribution calculates the number of microbatches for each pipeline that minimizes stragglers.

Let $N_{b, i}$ be the number of microbatches for $i$-th pipeline ($0 \le i < x,~x = \sum_{j=0}^{p-1}x_j$) and $T_i$ be the iteration time of the pipeline with a single microbatch of size $b$.
Minibatch size for $i$-th pipeline can be calculated as $N_{b, i} \times b$.
By adjusting $N_{b, i}$, we minimize variance between different pipelines' batch processing times. 
We formulate it as an integer optimization problem:
\begin{mini}
  {}{\sum_{i=0}^{x-1}(N_{b, i}T_i - \overline{N_{b}T})^2}{}{}
  \addConstraint{\sum_{i=0}^{p-1}{N_{b, i}bx_i} = B}
  \addConstraint{N_{b, i} \in \mathbb{N}}
  \label{eq:nonlinear_optimization}
\end{mini}
where $\overline{N_bT}$ is the average iteration time of all ($0 \le i < x$) pipelines.
Any integer nonlinear optimization solver can be used to get $N_{b, i}$ and thus minibatch size for each pipeline.

Note that the optimization may fail to redistribute batch properly, primarily when the global batch size is too small and cannot be split to integers.
\name does not change the global batch size arbitrarily in such cases. 
Instead, it recommends an adjusted global batch size close to the original one but distributable.

\section{Dynamic Reconfiguration}
\label{sec:dynamic_reconfiguration}

\begin{figure}[!t]
    \centering

    \subfloat[A node failure in a 4-node pipeline. We have a 3-node pipeline template, thus a new pipeline with 3 nodes is instantiated, which replaces the existing one.]{\includegraphics[width=\linewidth]{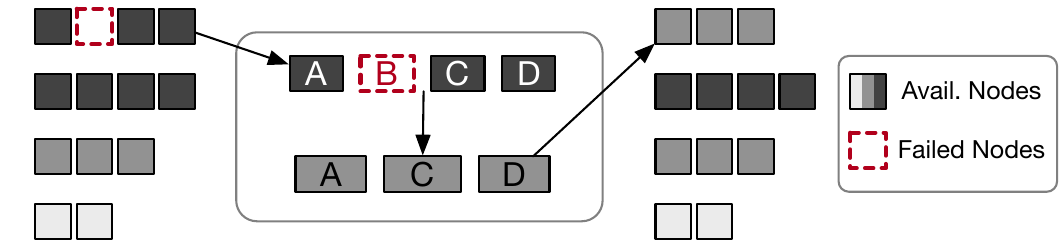}\label{fig:dynamic_reconfiguration1}}
    
    \subfloat[A node failure in a 2-node pipeline. Since there is no template for one node, it gets another node from another pipeline to keep the 2-node pipeline. Two affected pipelines reinstantiate or reconfigure themselves.]{\includegraphics[width=\linewidth]{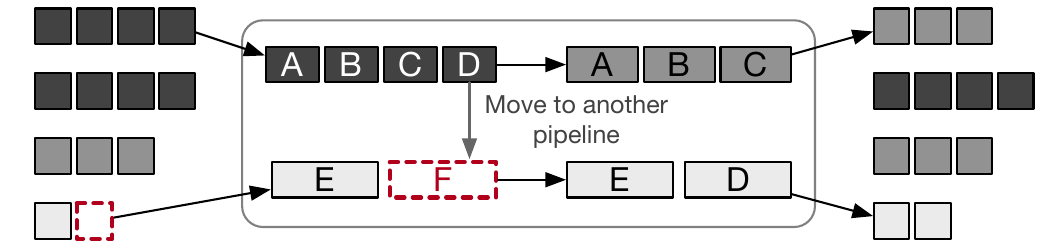}\label{fig:dynamic_reconfiguration2}}

    \subfloat[A node failure in a 2-node pipeline. Because it cannot borrow a node from any other pipeline, it is merged with another pipeline.]{\includegraphics[width=\linewidth]{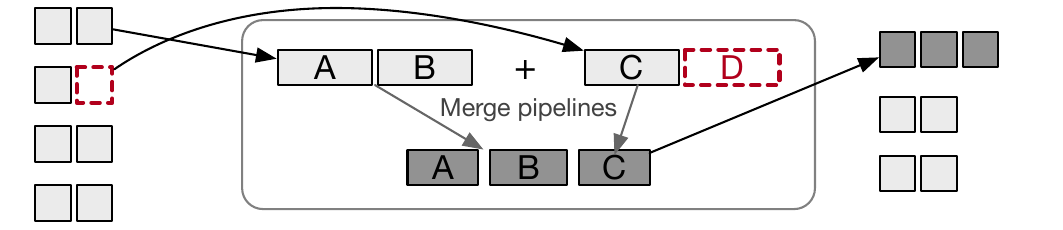}\label{fig:dynamic_reconfiguration3}}

    \caption{Three steps of pipeline reinstantiation. After reinstantiation is done, missing layers in a new pipeline are copied from other pipelines.}
    \label{fig:dynamic_reconfiguration}
\end{figure}

Upon a node failure, the pipeline it was assigned to becomes incomplete and has missing model states; therefore, training halts in that pipeline.
Pipelines affected by failures are replaced with new pipelines created via pipeline reinstantiation using precomputed pipeline templates (\S\ref{sec:reinstantiation}).
After reinstantiating pipelines, the nodes copy missing layers from unaffected pipeline replicas.
\name also redistributes batch in response to the pipeline configuration change (\S\ref{sec:redistribution}).
Provided that we have copies of the model states, \name can recover from failures until we have fewer than $(f+1)n_0$ nodes.

\subsection{Pipeline Reinstantiation}
\label{sec:reinstantiation}
\name instantiates a new pipeline from one of the pipeline templates, replacing the existing one affected by failures.
Given our limited number of pipeline templates, there might not be a suitable pipeline template for the remaining number of nodes.
Thus, pipeline reinstantiation is done in three steps:
simple reinstantiation, borrowing nodes, and merging pipelines.

For each pipeline, \name first checks if there is an instantiable pipeline template with remaining nodes; if so, \name simply reinstantiates it and replaces the old one (Figure~\ref{fig:dynamic_reconfiguration1}).
If there is no instantiable pipeline template with remaining nodes, \name tries to \textit{borrow nodes} from other pipelines until we have enough nodes to instantiate the smallest pipeline template (Figure~\ref{fig:dynamic_reconfiguration2}).
Pipelines that yield their nodes should also be reinstantiated with fewer nodes.

After many reconfigurations and node borrowings, all pipelines may not be able to yield their nodes.
When failures happen at this moment, the pipeline affected by failures cannot be reinstantiated due to a lack of nodes.
In such a case, \name \textit{merges pipelines} to create a bigger pipeline (Figure~\ref{fig:dynamic_reconfiguration3}).
It is guaranteed that we have an instantiable pipeline template for a merged pipeline if it has at least $n_0$ nodes (the minimum number of nodes to maintain one single pipeline). See Appendix~\ref{sec:apdx_pipeline_merge} for a proof.

\subsection{Batch Redistribution}
\label{sec:redistribution}
After pipeline reinstantiation, execution configuration has been changed and distributed batches no longer ensure balanced execution.
\name runs Equation~\ref{eq:nonlinear_optimization} again given the current set of the number of pipeline instances and continues training with a newly calculated batch size.
Each pipeline may have more batches to compute, but the global batch size remains constant.

\section{Implementation}
\label{sec:implementation}
We implement \name in Python using PyTorch~\cite{2020pytorch} and HuggingFace Transformers~\cite{2022hftransformer} using components from Merak~\cite{2023merak} in using PyTorch fx symbolic tracer~\cite{2022torchfx} to create pipelines and from DeepSpeed~\cite{2020deepspeed} to run them.
For the implementation of 3D parallelism, \name has integrated PyTorch Fully Sharded Data Parallel (FSDP)~\cite{2023fsdp} as a replacement for tensor parallelism in each stage~\cite{2020zero}, pipeline parallelism, and data parallelism.
We have used Pyomo library and its Mixed-Integer Nonlinear Decomposition Toolbox (MindtPy) solver for non-linear integer optimization~\cite{2011pyomo, 2021pyomo}.
Here, we elaborate on the key challenges addressed in implementing \name.


\begin{figure}[!t]
    \includegraphics[width=\linewidth]{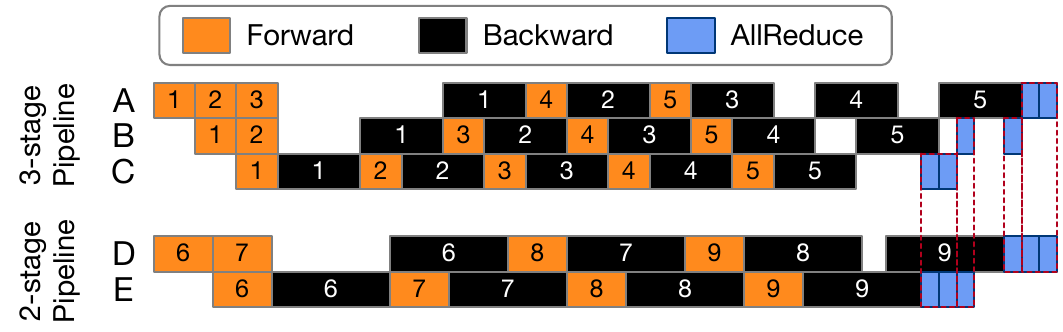}
    \caption{Heterogeneous pipeline execution with one 3-stage pipeline and one 2-stage pipeline.
    Allreduce synchronization happens at the end of iteration and done in layer granularity to communicate between multiple GPUs.}
    \label{fig:heterogeneous_pipeline}
  \end{figure}

\subsection{Model Synchronization Between Heterogeneous Pipelines}
\label{sec:heterogeneous_execution}
Model gradient synchronization between pipelines in hybrid parallelism is typically done at pipeline stage granularity.
However, because heterogeneous pipelines in \name have different stage configurations, stage-wise synchronization does not work.
\name instead breaks down stages into layers and synchronizes them individually, similar to PyTorch bucketing~\cite{2020pytorch}.
Figure~\ref{fig:heterogeneous_pipeline} illustrates an example of a 6-layer model execution with two heterogeneous pipelines.
Stage E in the 2-stage pipeline has 3 layers to synchronize which are stored in stage B and C in the 3-stage pipeline.
\name performs synchronization for each individual layer with potentially different peer nodes.

Data synchronization in smaller data might have performance issue because it might not fully saturate the network.
We overlap communication with computation to offset increased communication latency~\cite{2020pytorch}.

\subsection{Detecting Node Failures}
\label{sec:detecting_node_failures}
\name uses NCCL for communication between GPUs.
However, NCCL cannot detect unexpected communication channel disconnection and hangs when tries to communicate with the failed node until a timer expires.
To detect a node failure immediately, we launch a CPU process on each node and establish a TCP connection to a centralized CPU process.
When a node dies, a socket disconnection event is triggered and broadcasted for reconfiguration.

\section{Evaluation}
\label{sec:evaluation}

We evaluate the effectiveness of \name on large DNN models with 340M to 6.7B parameters and compare it against both Bamboo and Varuna.
We summarize the results as follows:

\begin{denseitemize}
    \item \name outperforms the state-of-the-art solutions by up to $29.6 \times$ when nodes fail more frequently and matches them as failures become less frequent (\S\ref{sec:tp_failure_freqs}).
    
    \item \name's benefits extend to real-world settings where nodes are out and join back following spot instance traces.
        It outperforms the rest by up to $9.1 \times$ on average (\S\ref{sec:trace_driven_tp}).

    \item Ablation studies show that \name's one-time planning overhead is low and it has high GPU utilization (\S\ref{sec:ablation}).
\end{denseitemize}

\subsection{Experimental Setup}
\label{sec:exp_setup}

\parabf{Cluster setup.}
We evaluate \name using 30 NVIDIA A40 GPUs with 40GB GPU memory each.
The GPUs are connected to each other via a 200Gbps Mellanox ConnectX-6 InfiniBand adaptor for communication.

Varuna requires a remote object storage to store checkpoints for fault tolerance.
We deploy a distributed object storage that consists of 6 nodes with two Intel Xeon Gold 6330 CPUs with 28 cores each, 512GB CPU memory, a 4TB PCIe 4.0 NVMe drive, and a 200Gbps Mellanox ConnectX-6 InfiniBand adaptor, respectively.
We use MinIO for distributed object storage software~\cite{2023minio}.

\begin{table}[!t]
    \footnotesize
    \caption{Model and batch configurations. * in Bamboo indicates the largest possible microbatch runnable in our evaluation environment. X means not runnable even with 1 microbatch size.}
    \label{tab:models}
    \begin{tabular}{c@{\hspace{-1pt}}c@{\hspace{8pt}}c|ccc}
        \toprule
               \multirow{2}{*}{} & \multirow{2}{*}{\# Params} & \multirow{2}{*}{\shortstack{Global\\Batch}} & \multicolumn{3}{c}{Microbatch Size} \\ 
                                 &                            &                               & Bamboo     & Varuna     & Oobleck     \\ \midrule
    BERT-Large~\cite{2019bert}   & 340M                       & 8192                          & 4*         & 32         & 32          \\
    GPT-2~~\cite{2019gpt2}       & 345M                       & 8192                          & 1*         & 32         & 32          \\ \midrule
    GPT-3 Medium~\cite{2020gpt3} & 350M                       & 8192                          & X          & 16         & 16          \\
    GPT-3 2.7b~\cite{2020gpt3}   & 2.7B                       & 1024                          & X          & 2          & 2           \\
    GPT-3 6.7b~\cite{2020gpt3}   & 6.7B                       & 1024                          & X          & 2          & 2           \\
    \bottomrule
    \end{tabular}
\end{table}

\begin{table*}[t]
    \small
    \caption{Throughput (samples/s) with different frequency of failures. Bamboo was not able to run any GPT-3 model due to lack of memory (OOM).}
    \label{tab:tput_freq}
    \begin{tabular}{l|rrr|rrr|rrr|rrr|rrr}
    \toprule
    Models           & \multicolumn{3}{c|}{BERT-Large} & \multicolumn{3}{c|}{GPT-2} & \multicolumn{3}{c|}{GPT-3 Medium} & \multicolumn{3}{c|}{GPT-3 2.7b} & \multicolumn{3}{c}{GPT-3 6.7b}    \\ \midrule
    Failure Frequency & 6h     & 1h     & 10m    & 6h    & 1h    & 10m   & 6h    & 1h    & 10m   & 6h   & 1h   & 10m  & 6h   & 1h   & 10m   \\ \midrule
    Bamboo           & 77.04  & 75.60  & 69.84  & 17.47 & 17.13 & 16.01 &       &       &       &      &      &      &      &      &       \\
    Varuna           & 259.57 & 245.39 & 168.15 & 86.42 & 83.94 & 69.67 & 29.52 & 27.61 & 20.71 & 7.27 & 6.41 & 0.36 & 4.02 & 2.91 & 0.12  \\
    Oobleck          & 287.10 & 286.28 & 282.11 & 85.59 & 85.42 & 84.80 & 29.30 & 29.21 & 28.70 & 7.29 & 7.23 & 6.89 & 4.33 & 4.22 & 3.55  \\
    \bottomrule
    \end{tabular}
\end{table*}

\parabf{Baselines.}
We compare \name to the following baselines:
\begin{denseitemize}
    \item \textit{Varuna}~\cite{2022varuna}: A resilient training framework based on automated parallel configuration and checkpoints~\cite{2022varunagithub}.
    \item \textit{Bamboo}~\cite{2023bamboo}: A resilient training framework based on redundant computation without full restart~\cite{2023bamboogithub}.
\end{denseitemize}
Neither of them supports 3D parallelism; hence, \name uses one GPU per node configuration to avoid its planner generating plans that cannot be implemented in our baselines.

Both works focus on utilizing spot instances, while \name supports general fault tolerance including preemptions in spot instance environments.
Spot instance environments have a unique mechanism of \textit{preemption notification}; the system is notified prior to actual preemption happening. 
In our spot instance-based evaluation, all three frameworks leverage early warning. 
In general, however, there is no such notification.

For Varuna, we periodically perform synchronous checkpointing for every 10 iterations following their continuous checkpointing policy~\cite{2022varuna}.

\parabf{Workloads.}
Table~\ref{tab:models} lists model configurations.
We adopt GPT-2 and BERT-Large from Bamboo and Varuna, and add three different configurations of GPT-3 from OpenAI~\cite{2020gpt3} to verify its scalability.
Although our evaluation only shows transformer models for comparison against two predecessors, \name's design is not limited to transformer language models and can support other DNN models.
For all evaluations, we use the Wikitext dataset~\cite{2016wikitext} and TF32 precision.

We also list batch size configurations for each framework in Table~\ref{tab:models}.
The reasons behind the discrepancy in batch sizes are twofold.
First, Bamboo needs to store additional model states for redundant computation, requiring $2\times$ memory.
Second, Bamboo does not use activation checkpointing,\footnote{This is because Bamboo's design choice stems from imbalanced memory consumption due to different amount of activations across stages.
Activation checkpointing~\cite{2016activationrecomputation} drastically reduces memory consumption by activations and it conflicts with Bamboo's design.}
while Varuna and Oobleck do.

\begin{figure*}
    \centering
    \begin{subfigure}[r]{\textwidth}
        \centering
        \includegraphics[width=0.4\textwidth]{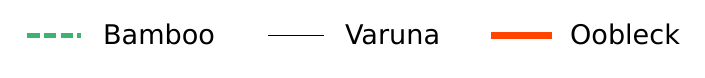}
    \end{subfigure}
    \begin{subfigure}[t]{0.25\textwidth}
        \includegraphics[width=\textwidth]{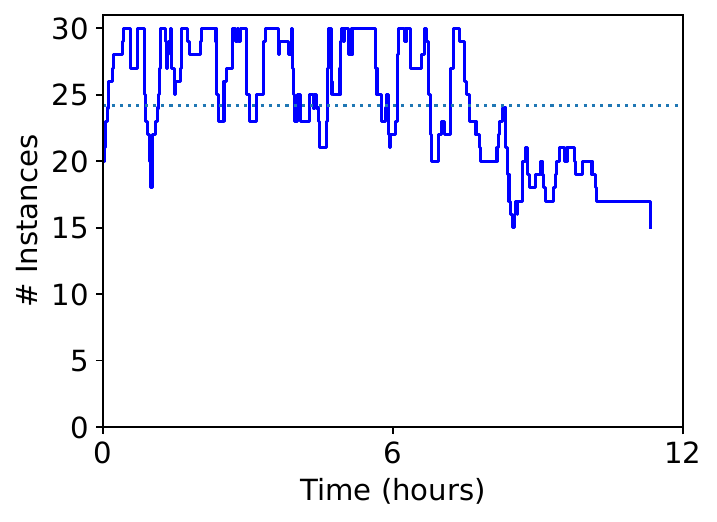}
        \label{fig:tpu_p3_node}
    \end{subfigure}%
    \begin{subfigure}[t]{0.25\textwidth}
        \includegraphics[width=\textwidth]{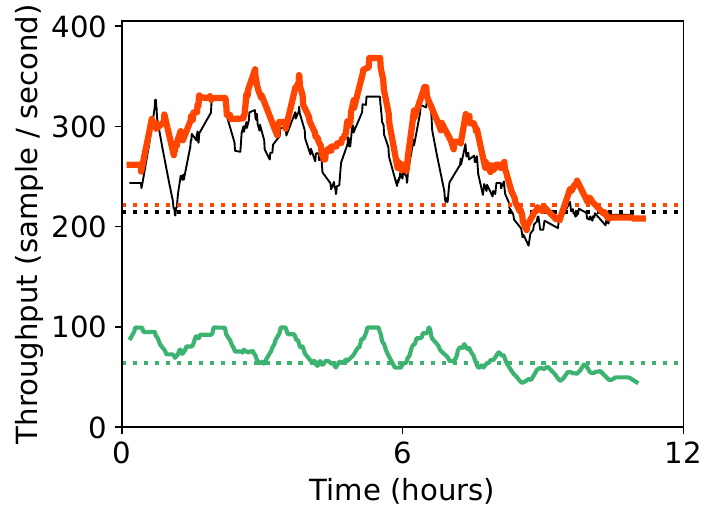}
        \label{fig:tput_p3_bert}
    \end{subfigure}%
    \begin{subfigure}[t]{0.25\textwidth}
        \includegraphics[width=\textwidth]{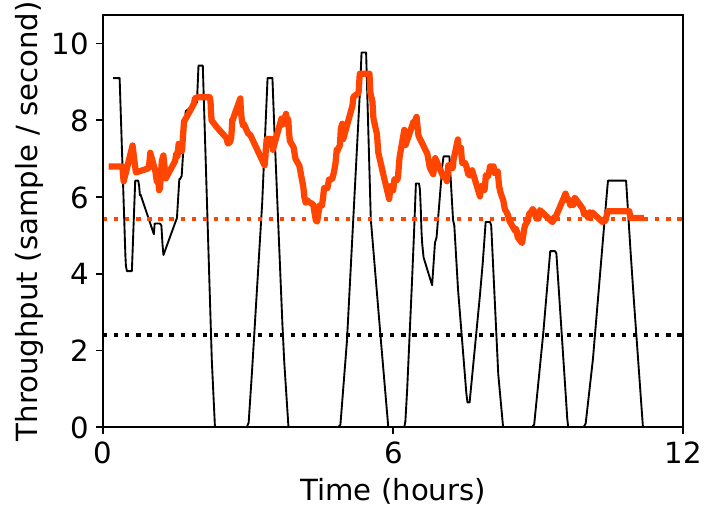}
        \label{fig:tput_p3_gpt327b}
    \end{subfigure}%
    \begin{subfigure}[t]{0.25\textwidth}
        \includegraphics[width=\textwidth]{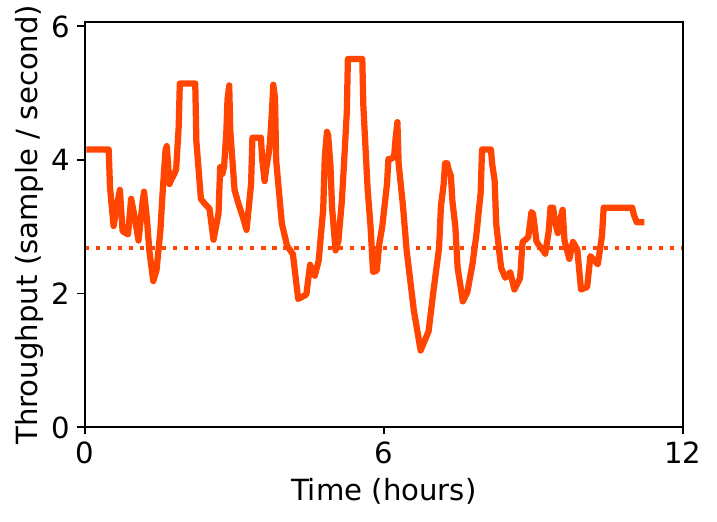}
        \label{fig:tput_p3_gpt367b}
    \end{subfigure}
    \begin{subfigure}[t]{0.25\textwidth}
        \includegraphics[width=\textwidth]{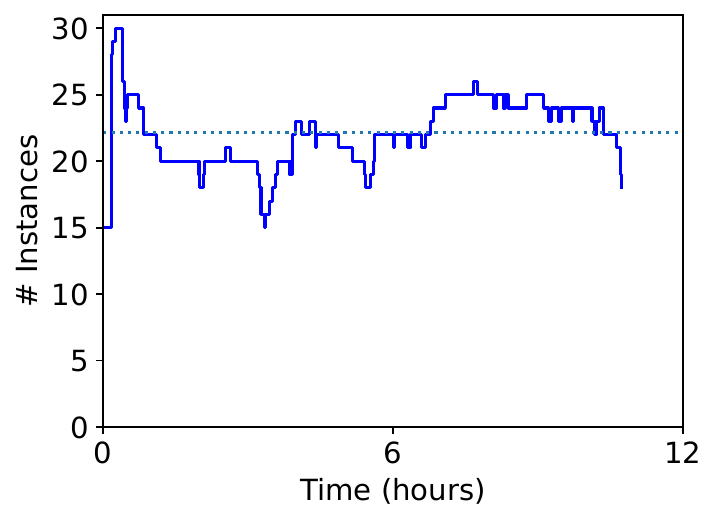}
        \caption{Node availability}
        \label{fig:tput_gcp_node}
    \end{subfigure}%
    \begin{subfigure}[t]{0.25\textwidth}
        \includegraphics[width=\textwidth]{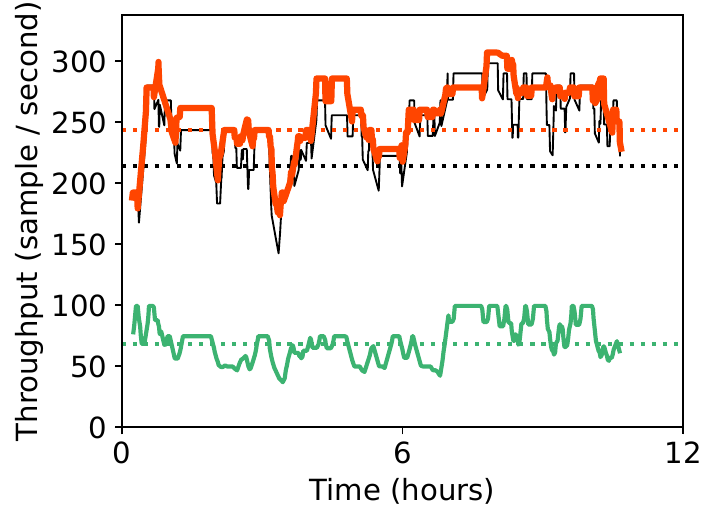}
        \caption{BERT-Large}
        \label{fig:tput_gcp_bert}
    \end{subfigure}%
    \begin{subfigure}[t]{0.25\textwidth}
        \includegraphics[width=\textwidth]{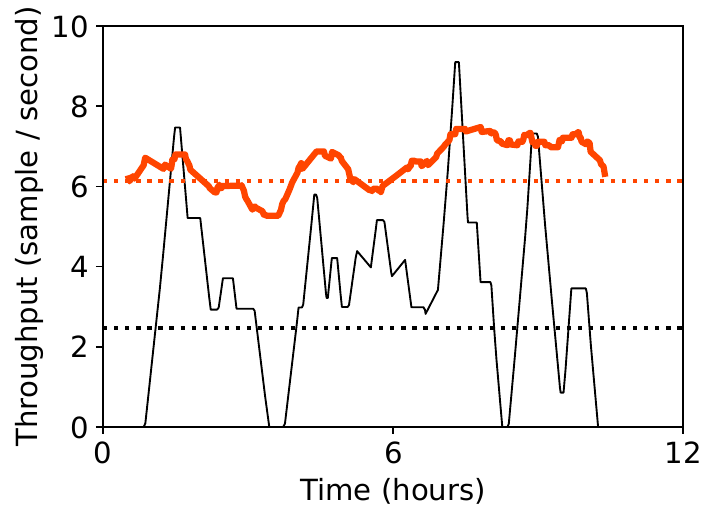}
        \caption{GPT-3 2.7b}
        \label{fig:tput_gcp_gpt327b}
    \end{subfigure}%
    \begin{subfigure}[t]{0.25\textwidth}
        \includegraphics[width=\textwidth]{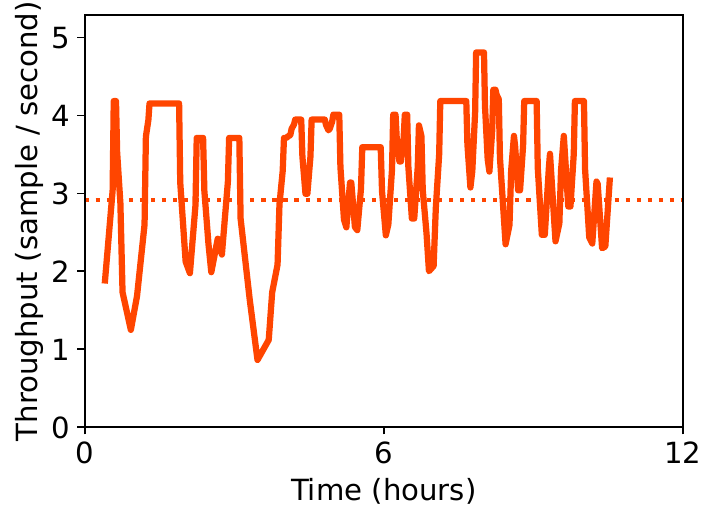}
        \caption{GPT-3 6.7b}
        \label{fig:tput_gcp_gpt367b}
    \end{subfigure}
    \caption{Throughput changes in spot instances environment, EC2 \texttt{P3} instances (top) and GCP \texttt{a2-highgpu-1g} instances (bottom), with various models. Note the different Y-axes scales for different models. Horizontal dotted lines represent average throughput.
    Bamboo could not run any GPT-3 models, while Varuna failed for GPT-3 6.7b.}
    \label{fig:tput_time}
\end{figure*}
 
\subsection{Throughput Under Controlled Failures}
\label{sec:tp_failure_freqs}

We first evaluate the average throughput of Bamboo, Varuna, and \name on various failure scenarios.
We set the frequency of failures from once every 6 hours (low rate) to once every 10 minutes (high rate) to cover a wide spectrum of environments \cite{2023bamboo,2022varuna,2022spotlake,2021nodefailurehpc,2010studyfailurehpc}.
We monotonically reduce the number of available nodes without node recovery and measure average throughput until less than half of the nodes (15 nodes) remain.

Table~\ref{tab:tput_freq} shows the average throughput for different frequencies of failures.
\name outperforms or matches other frameworks for every model in every scenario.
Due to static overhead coming from redundant computation, Bamboo's throughput, while stable over different failure frequencies, is consistently low.
Also, because they need to hold a large portion of GPU memory for an additional copy of model states for redundant computation and activations, Bamboo cannot train large models due to out-of-memory (OOM) errors.

Bamboo's gap from Varuna was surprising.
We believe that it is due to differences in our evaluation environments.
We use 200Gbps high-performance networking and ample NVMe storage, while the original evaluation used Amazon EC2 \texttt{p3.2xlarge} and \texttt{p3.8xlarge} instances with up to 10Gbps network and S3 object storage. 
High storage throughput in our setup significantly sped up Varuna.
Furthermore, we use different batch configurations for Bamboo and Varuna so that each can run at maximum resource utilization; in contrast, Bamboo's evaluation used the same configuration for both, which throttled Varuna's potential throughput.

Overall, Varuna performs comparably to us when either the model is small or failures happen infrequently.
For larger models and/or more frequent failures, the higher overhead of loading and saving checkpoints drastically decrease its throughput ($29.6 \times$ for GPT3-6.7B).

\subsection{Throughput in Spot Instance Traces}
\label{sec:trace_driven_tp}

Next, we borrow real traces of node availability changes of spot instances from the Bamboo repository~\cite{2023bamboogithub} and use their tools to replay the trace for 12 hours \cite{2023bamboo}.
Events in the trace had been gathered from Amazon EC2 \texttt{P3} spot instances (\texttt{p3.2xlarge} and \texttt{p3.8xlarge}) and Google Cloud Platform (GCP) \texttt{a2-highgpu-1g} spot instances.
Node preemption events happen every 7.7 minutes and 10.3 minutes, on average, for EC2 and GCP spot instances, respectively.
Unlike experiments earlier where the number of available nodes monotonically decreases (\S\ref{sec:tp_failure_freqs}), these traces include node addition events too.
The actual experiments took place in our cluster where we simulated the availability events.

Figure~\ref{fig:tput_time} represents throughput changes for some models. See Appendix~\ref{sec:apdx_spot_instances} for results from other models; omitted models are similar to the BERT-Large model.
Note that each data point in lines is an average throughput of a short time window for visibility.
As such, it may not represent 0 throughput, which happens during reconfiguration or full restart.
BERT-Large, the smallest model, has the least amount of checkpointing overhead; as such, the performance of Varuna is similar to \name.
However, as models become larger, even in our high-performance storage setup, Varuna takes increasingly longer to store and load checkpoints.
Also, it starts suffering from fallbacks because it fails to finish checkpointing within the preemption grace period, decreasing its throughput more drastically.
For example, for GPT-3 with 6.7 billion parameters, Varuna hung over the entire time and could not make training progress.
Frequent changes in node availability trigger Varuna's full reconfiguration more frequently, wasting resources for saving and loading checkpoints which decreases its throughput.

Bamboo cannot run any model larger than GPT-2.

\pagebreak
\subsection{Ablation Study}
\label{sec:ablation}

\subsubsection{Overhead of \name Planning.}


\begin{table}
    \small
    \caption{\name planning latency (in seconds) with various numbers of layers and nodes. BERT-Large, GPT-2, and GPT-3 Medium have 24 layers, while GPT-3 2.7b and 6.7b have 32 layers.}
    \label{tab:planning_latency}
    \begin{tabular}{cc|R{1cm}R{1cm}R{1cm}R{1cm}}
    \toprule
    \multirow{2}{*}{\# Nodes} & \multirow{2}{*}{\shortstack{\# GPUs\\Per Node}} & \multicolumn{4}{c}{\# Layers}      \\
                          &                                                                            & 24    & 32    & 64     & 96        \\ \midrule
\multirow{3}{*}{8}        & 1                                                                          & 0.28  & 0.71  & 9.65  & 68.50     \\
                          & 4                                                                          & 0.41  & 1.15  & 11.58  & 74.56     \\
                          & 8                                                                          & 0.54  & 1.50  & 20.98  & 109.76    \\ \midrule
\multirow{3}{*}{16}       & 1                                                                          & 3.37  & 7.45  & 66.35  & 540.36    \\
                          & 4                                                                          & 4.56  & 10.41 & 108.10 & 649.67    \\
                          & 8                                                                          & 4.90  & 11.78 & 176.04 & 1,213.63  \\ \midrule
\multirow{3}{*}{24}       & 1                                                                          & 11.35  & 30.11 & 262.47 & 1,477.54  \\
                          & 4                                                                          & 14.78 & 45.80 & 472.53 & 2,153.84  \\
                          & 8                                                                          & 15.59 & 49.25 & 520.08 & 3,297.92 \\
    \bottomrule
    \end{tabular}
\end{table}

We run the planning algorithm (\S\ref{sec:planning}) to create a single pipeline template using various model specifications (number of layers) and node specifications (number of nodes and GPUs per node) to see its scalability.
Table~\ref{tab:planning_latency} shows the planning algorithm latency with various numbers of layers and nodes.
Considering the estimated end-to-end training time of large models using hundreds of GPUs is $\sim$100 days~\cite{2021megatron,2020gpt3}, the planning overhead is marginal ($<0.1\%$).
Even if more GPUs are used for training (\ie, thousands of GPUs), the number of nodes for each pipeline template does not increase significantly.
This is because \name simply instantiates more of the smaller pipelines and utilizes data parallelism.
Also, once a pipeline template is generated, the creation of subsequent templates can drastically be accelerated, adding negligible time, thanks to the usage of memoization and intermediate caches (\S\ref{sec:dc_algorithm}).


\subsubsection{Throughput Breakdown.}

\begin{figure}
    \begin{subfigure}{\linewidth}
        \centering
        \includegraphics[width=\textwidth]{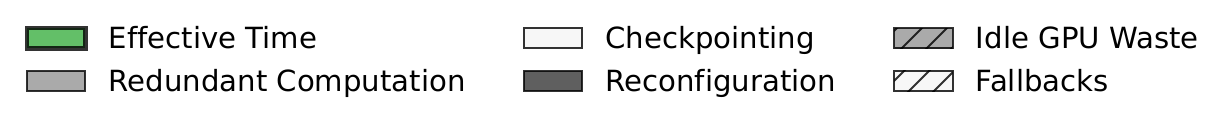}
    \end{subfigure}
    \begin{subfigure}{\linewidth}
        \includegraphics[width=\linewidth]{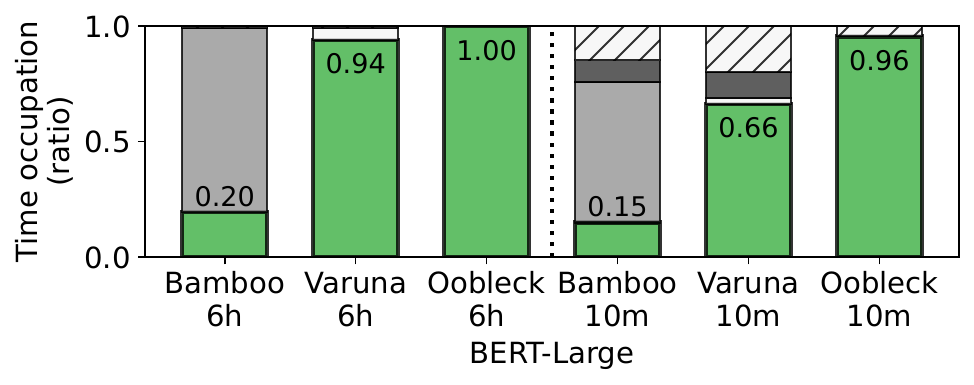}
    \end{subfigure}
    \begin{subfigure}{\linewidth}
        \includegraphics[width=\linewidth]{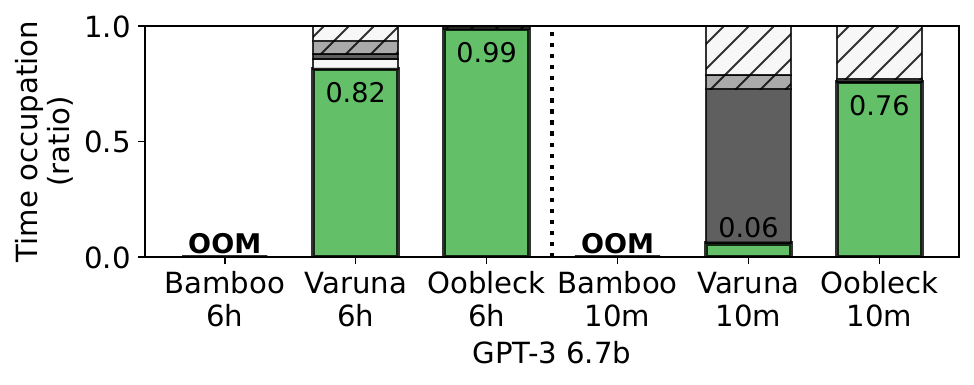}
    \end{subfigure}
    \caption{Time occupation breakdown of Bamboo, Varuna, and \name running BERT-Large and GPT-3 6.7b model.}
    \label{fig:overhead_breakdown}
\end{figure}

Figure~\ref{fig:overhead_breakdown} shows the impact of each overhead on training throughput.
Varuna's overheads include restarting overhead (reinitialization and loading a checkpoint), saving checkpoints, and throughput loss due to idle GPUs and fallbacks.
Bamboo has significant overhead from redundant computation and reconfiguration overhead for data copy.
Redundant computation in Bamboo is shown to have more than 50\% overhead because it includes several indirect components that cannot clearly be separated -- for example, pipeline bubble due to an increased number of pipeline stages to store redundant model states and imbalanced pipeline stages for balancing memory.
\name only has a small copying overhead for missing layers after pipeline reinstantiation.

All the frameworks experience fallback overhead, losing some training progress due to failures happening in the middle of iteration.
It is more severe in Varuna because it has to fall back to the last checkpoint, while Bamboo and \name lose at most one iteration.
Varuna suffers significantly from much such waste occupying up to 95\% of wall clock time, while \name can achieve effective throughput of at least 75\% of the no-failure scenario.


\begin{table}
    \footnotesize
    \caption{Throughput (samples/s) for Varuna, Varuna with no checkpointing overhead, and \name running BERT-Large and GPT-3 6.7b.}
    \label{tab:tp_failure_freqs_nockpt}
    \begin{tabular}{l|rrr|rrr}
    \toprule
                     & \multicolumn{3}{c|}{BERT-Large} & \multicolumn{3}{c}{GPT-3 6.7b}    \\ \midrule
    Failure Frequency  & 6h   & 1h   & 10m  & 6h   & 1h   & 10m  \\ \midrule
    Varuna           & 259.57 & 245.39 & 168.15 & 4.02 & 2.91 & 0.12 \\
    Varuna (no ckpt) & 275.27 & 270.81 & 245.78 & 4.57 & 4.00 & 0.36 \\
    Oobleck          & 287.10 & 286.28 & 282.11 & 4.33 & 4.22 & 3.55 \\
    \bottomrule
    \end{tabular}
\end{table}

\subsubsection{Impact of Checkpointing Overhead.}
\label{sec:no_checkpoint_overhead}
Overhead of fault tolerance in Varuna mostly comes from serialized checkpointing and full restart.
CheckFreq~\cite{2021checkfreq} recently introduced checkpointing optimization by pipelining checkpointing with computation, and it can improve the throughput of checkpoint-based training.
Here, we go further by \emph{completely} removing the overhead of checkpointing and analyzing the impact of failures only.
Because of lower checkpointing overhead, we also increase the frequency of checkpointing from every 10 iterations to every 2 iterations.
We define full restart overhead as framework initialization plus loading the last checkpoint overhead.
While checkpointing overhead during training can be hidden by overlapping it with computation, the overhead of loading a checkpoint cannot be overlapped with computation; this is because computation cannot begin until the entire checkpoint is loaded.

Table~\ref{tab:tp_failure_freqs_nockpt} compares Varuna, Varuna with no checkpointing overhead, and \name running the BERT-Large model and GPT-3 6.7b model.
Although Varuna could increase its throughput, it still suffers from up to 90\% overhead for the higher frequency of failures.

\section{Related Works}
\label{sec:related}

\parabf{Elastic training.} 
Horovod Elastic~\cite{horovodelastic} and TorchElastic~\cite{torchelastic} restart training upon failure and recovery.
CoDDL~\cite{2021coddl} balances resource efficiency and short job priority in elastic resource sharing problems.
Aryl~\cite{2022aryl} enables elastic resource sharing between inference and training workloads.
Pollux~\cite{2021pollux} considers both resource utilization and statistical efficiency of training jobs when adaptively allocating resources.
These works are all limited to elastic resource sharing for data-parallel training of small models that fit within a single GPU.

\parabf{Distributed training with spot instances.} 
Varuna~\cite{2022varuna} uses hybrid parallelism for distributed
training with cheaper spot cloud instances.
It reconfigures training when one or more failures happen. 
Bamboo~\cite{2023bamboo} introduces redundant computation (RC) in pipeline parallelism to provide resilience in the presence of frequent preemptions of training with spot instances.
\name matches or significantly outperforms them for a wide range of model sizes and failure frequencies. 

\parabf{Large model training.}
Numerous proposals in recent years have attempted to optimize large model training through diverse mechanisms \cite{2021megatron, 2019pipedream, 2019gpipe, 2021dapple, 2021piper, 2020hetpipe, 2021chimera, 2022alpa, 2020zero, 2021zerooffload, 2021zeroinfinity, 2022turingnlg}.
However, they do not provide fault tolerance out-of-the-box and are orthogonal to \name.

\section{Conclusion}

In this paper, we introduced \name, a resilient distributed Large model training framework with guaranteed fault tolerance.
\name co-designs planning and execution for fast failure recovery and high throughput by introducing pipeline templates that are carefully designed during planning and reused during training execution.
It achieves efficient failure recovery by reinstantiating pipeline(s) from the pipeline templates and copying missing model states from pipeline replicas without requiring a full restart from checkpoints.
\name outperforms state-of-the-art fault-tolerant distributed training solutions Bamboo and Varuna by up to $29.6\times$.

\section*{Acknowledgements}
We would like to thank the SOSP reviewers, our shepherd Keval Vora, and SymbioticLab members for their insightful feedback. 
This work is in part supported by NSF grants CNS-1909067, CNS-2104243, and CNS-2106184 as well as gifts from VMWare, Google, and Meta.

\label{EndOfPaper}

{
  \bibliographystyle{ACM-Reference-Format}
  \bibliography{ref}
}
\clearpage

\section*{Appendix}
\appendix
\section{Proof of Nodes Specification Covering All Nodes}
\label{sec:apdx_proof_frobenius}

We prove the following theorem which shows a finite set of $p$ number of pipeline templates, where the number of nodes is $(n_0, n_1, \dots, n_{p-1})$ ($n_i < n_{i+1}$), can fully cover the node cluster with feasible number of nodes $N'$ any time irrespective of how many failures happen on the cluster as long as its feasibility holds.

\begin{theorem}
    $N'$ nodes ($(f+1)n_0 \le N' \le N$) can always be represented as a linear combination of the $p$ pipeline templates with $(n_0, n_1, \dots, n_{p-1})$ number of nodes, respectively,
    if the following two conditions are satisfied:
    \begin{denseenum}
        \item $p > n_0 - 1$. \label{eq:apdx_req1}
        \item $n_i$ are consecutive integers ($n_i + 1 = n_{i+1}$) \label{eq:apdx_req2}.
    \end{denseenum}
\end{theorem}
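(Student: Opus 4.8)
The plan is to prove this constructively via the division algorithm, leveraging the consecutive-integers structure rather than invoking the Frobenius number directly. First I would restate the hypotheses in workable form: Condition~\ref{eq:apdx_req2} lets us write $n_i = n_0 + i$ for $0 \le i \le p-1$, and Condition~\ref{eq:apdx_req1}, which for positive integers is equivalent to $p \ge n_0$, then says exactly that every offset in $\{0, 1, \dots, n_0 - 1\}$ is realized by some template, namely the one with $n_0 + s$ nodes, since $s \le n_0 - 1 \le p - 1$.

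The core step is a single application of the division algorithm. Given a feasible $N'$ with $(f+1)n_0 \le N' \le N$, write $N' = m n_0 + s$ where $m = \lfloor N'/n_0\rfloor$ and $0 \le s \le n_0 - 1$. Since $N' \ge (f+1)n_0 \ge n_0$, we get $m \ge f+1 \ge 1$, so at least one copy of the smallest template is available to enlarge. I would then exhibit the representation explicitly: take $x_0 = m-1$ copies of the $n_0$-node template and $x_s = 1$ copy of the $(n_0+s)$-node template, with all other $x_i = 0$; this degenerates to $x_0 = m$ when $s = 0$. These coefficients are non-negative integers, the node count is $(m-1)n_0 + (n_0+s) = m n_0 + s = N'$, and as a side benefit the pipeline count is $\sum_i x_i = m \ge f+1$, so the requirement of maintaining $f+1$ replicas is automatically satisfied.

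I do not expect a substantive obstacle; the only thing to get right is the bookkeeping around the two hypotheses. Concretely, Condition~\ref{eq:apdx_req1} is used exactly once, to guarantee that the remainder $s$ never exceeds the largest available offset $p-1$, and the lower bound $(f+1)n_0 \le N'$ is used exactly once, to guarantee $m \ge 1$ so that a base template exists to enlarge. The upper bound $N' \le N$ plays no role in the existence argument and merely delimits the feasible regime. If one instead prefers the statement phrased through the Frobenius number $g$, as the surrounding text suggests, I would note that the classical formula for the Frobenius number of an arithmetic progression $\{a, a+1, \dots, a+k\}$ collapses to $a-1$ once $k \ge a-1$; hence under Condition~\ref{eq:apdx_req1} we have $g = n_0 - 1 < n_0 \le (f+1)n_0 \le N'$, recovering the same conclusion.
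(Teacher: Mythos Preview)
Your proof is correct and takes a genuinely different route from the paper's. The paper invokes the classical closed-form Frobenius number for arithmetic progressions \cite{2005frobenius}, observes that under the two hypotheses it collapses to $g = n_0 - 1$, and then notes $N' \ge (f+1)n_0 \ge n_0 > g$. You instead give a direct, self-contained construction via the division algorithm, exhibiting the coefficients $x_0 = m-1$, $x_s = 1$ explicitly. Your approach buys two things the paper's does not: it avoids any external dependence on the Frobenius formula (making the argument elementary), and it yields as a byproduct that the total pipeline count $\sum_i x_i = m \ge f+1$, which is exactly Requirement~2 of \S\ref{sec:enumerating_instantions} and which the paper's Frobenius argument does not address. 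The paper's approach is slightly terser if the formula is taken as known, and your closing paragraph already sketches that alternative correctly.
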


\begin{proof}
    We first formulate an integer linear combination to represent $N'$:

    \begin{equation}
        N' = x_0n_0 + x_1n_1 + \dots, x_{p-1}n_{p-1}
        \label{eq:apdx_linearcombination}
    \end{equation}
    where $x_i$ is the number of pipelines to be instantiated from the pipeline template with $n_i$ number of nodes.

    The Frobenius number $g(n_0, n_1, \dots, n_{p-1})$, the largest number that cannot be represented as a linear combination of Equation~\ref{eq:apdx_linearcombination}, has proven to be:
    \begin{equation}
        g = \left( \left\lfloor \frac{n_0 - 2}{p-1} \right\rfloor \right) + d(n_0 - 1)
    \end{equation}
    if the integer set $(n_0, n_1, \dots, n_{p-1})$ is an arithmetic sequence, \ie, $n_i = n_0 + d(i-1)$~\cite{2005frobenius}.

    When we apply both Requirements \ref{eq:apdx_req1} and \ref{eq:apdx_req2}, $g=n_0-1$.
    Fault tolerance threshold $f$ is a non-negative integer; the minimum feasible number of nodes $N'=(f+1)n_0$ is $n_0$.
    Therefore, any feasible $N'$ that is larger than $g$ and can be represented as a linear combination of Equation~\ref{eq:apdx_linearcombination}.
\end{proof}



\section{Proof of Guarantee for Pipeline Template Availability When Merging Pipelines}
\label{sec:apdx_pipeline_merge}
We first show this when failures happen in a single pipeline.
When we lose $k$ nodes ($k > 0$) from a pipeline, where all pipelines have $n_0$ nodes and are not able to yield any node, \name instantiates a new pipeline with $2n_0-k$ nodes by merging it with another $n_0$-node pipeline.
We prove that a pipeline template with $2n_0-k$ nodes is always available.
\begin{theorem}
    A set of pipeline templates always includes a pipeline template with $2n_0-k$ nodes ($2n_0-k \ge n_0$).
\label{theorem:pipeline_merge}
\end{theorem}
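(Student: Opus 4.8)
The plan is to show that $2n_0 - k$ lies in the range of node counts $[n_0, n_{p-1}]$ for which a pipeline template exists, so that the claim reduces to the construction of the template set in \S\ref{sec:node_specification}. Recall that the templates have node counts that are \emph{consecutive integers} starting at $n_0$ (Requirement~\ref{eq:apdx_req2}), so a template with $m$ nodes exists precisely when $n_0 \le m \le n_{p-1}$, where $n_{p-1} = n^{\max}_{p-1} = N - f n_0$. Thus it suffices to establish the two inequalities $n_0 \le 2n_0 - k$ and $2n_0 - k \le N - f n_0$.

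The lower bound is immediate: it is exactly the hypothesis $2n_0 - k \ge n_0$ stated in the theorem (equivalently $k \le n_0$, which holds because a pipeline being merged has at most $n_0$ nodes and we lost $k$ of them with at least one surviving, so in fact $1 \le k \le n_0 - 1$; either way $2n_0 - k \ge n_0 + 1 > n_0$ when $k < n_0$, and $=n_0$ in the boundary). For the upper bound, I would argue from the invariant that the cluster is in the ``all pipelines have $n_0$ nodes and cannot yield'' regime, which is the state with the most pipelines and hence the fewest total nodes still consistent with maintaining $f+1$ replicas. In that regime the total number of surviving nodes is at least $(f+1)n_0$ before the current failure, so after losing $k$ nodes we still have plenty; more to the point, $2n_0 - k \le 2n_0 \le N - f n_0$ holds whenever $N \ge (f+2)n_0$, which is guaranteed since the system only operates while it can maintain $f+1$ pipelines and the merge scenario only arises when there were at least two $n_0$-node pipelines plus the $f$ others, i.e. $N' \ge (f+2)n_0 - k$ for the current node count, and $N \ge N'$. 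I would state this cleanly as: since $k \ge 1$, $2n_0 - k \le 2n_0 - 1 < 2n_0 \le (f+2)n_0 \le N - fn_0 + 2n_0 \cdots$; the cleanest route is simply $2n_0 - k \le 2n_0 \le N - fn_0$ using $N \ge (f+2)n_0$.

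I expect the main obstacle to be \emph{pinning down the right hypothesis on $N$} (or on the current node count) so that the upper bound $2n_0 - k \le n_{p-1}$ is actually valid — the informal text asserts the template set ``always includes'' such a template, but this rests on the merge only being invoked when enough nodes remain, which should be tied back to the $(f+1)n_0$ feasibility floor and the fact that a merge combines exactly two $n_0$-node pipelines. Once that bookkeeping is made precise, the rest is a one-line interval-membership argument invoking the consecutive-integers property of the template set. A secondary, minor point is handling the general case where failures have accumulated across multiple pipelines rather than one: I would note that the ``borrow then merge'' procedure maintains the invariant that every pipeline has between $n_0$ and $n_{p-1}$ nodes, so a merge of any two pipelines yields a count $\le 2n_{p-1}$, but in the specific merge-triggering state all pipelines sit at exactly $n_0$, so the bound $2n_0 - k$ is the operative one and the argument above suffices.
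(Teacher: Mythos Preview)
Your proposal is correct and matches the paper's approach: both reduce the claim to the interval membership $n_0 \le 2n_0 - k \le n_{p-1} = N - fn_0$, and both justify the upper bound by observing that a merge is only triggered when at least $f+2$ pipelines of size $n_0$ are present (so that $f+1$ replicas survive the merge), which forces $N \ge (f+2)n_0 - k$ and hence $2n_0 - k \le N - fn_0$. The paper states this as a proof by contradiction while you argue directly, but the operative hypothesis you flag as the ``main obstacle''---that the merge scenario implies enough pre-failure nodes to bound $N$---is exactly the same lever the paper pulls.
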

\begin{proof}
    A set of pipeline templates has pipeline templates with up to $N-fn_0$ nodes (\S\ref{sec:node_specification}).
    Assume that we do not have a pipeline template with $2n_0-k$ number of nodes specification, then $N-fn_0 < 2n_0-k$ and $N < (f+2)n_0-k$ are assumed to be true.
    To not break the fault tolerance threshold that we maintain at least $f+1$ model replicas after merging two pipelines, we must have at least $f+2$ replicas, \ie, we should have had at least $(f+2)n_0$ nodes before failures.
    Since the initial number of nodes $N$ is always larger than the number of currently remaining nodes, $N > (f+2)n_0$ inequality holds and it contradicts our initial assumption.
    Therefore, we have a pipeline template with $2n_0-k$ number of nodes.
\end{proof}

When failures happen across several pipelines, multiple pipelines can have less than $n_0$ nodes.
\name repeatedly merges two pipelines until a new pipeline has enough number of nodes.
Assume we merged $m$ pipelines to get enough number of nodes, i.e., $\sum_{i=0}^{m}{n_{p_i}} \ge n_0$.
It means merging $m-1$ pipelines was not enough to get $n_0$ nodes, i.e., $\sum_{i=0}^{m-1}{n_{p_i}} < n_0$.
With $n_{p_m} \le n_0$, we have an inequality $n_0 \le \sum_{i=0}^{m-1}{n_{p_i}} + n_{p_m} < 2n_0$.
It has already been proved by Theorem~\ref{theorem:pipeline_merge} that we have a pipeline template for all numbers in the range.

\section{Throughput of All Models in Spot Instances}
\label{sec:apdx_spot_instances}

\begin{figure}
    \centering
    \begin{subfigure}[t]{0.8\linewidth}\
        \centering
        \includegraphics[width=\linewidth]{evaluations/gcp_res/legend.pdf}
    \end{subfigure}
    \begin{subfigure}[t]{0.5\linewidth}
        \includegraphics[width=\linewidth]{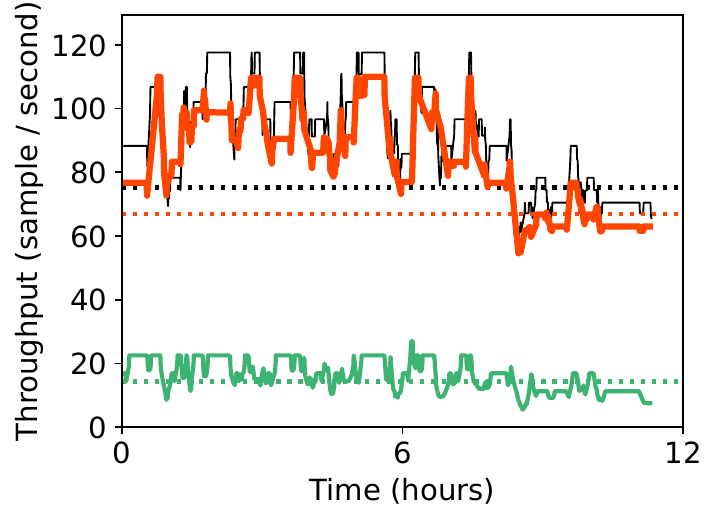}
        \caption{GPT-2 in \texttt{P3}.}
    \end{subfigure}%
    \begin{subfigure}[t]{0.5\linewidth}
        \includegraphics[width=\linewidth]{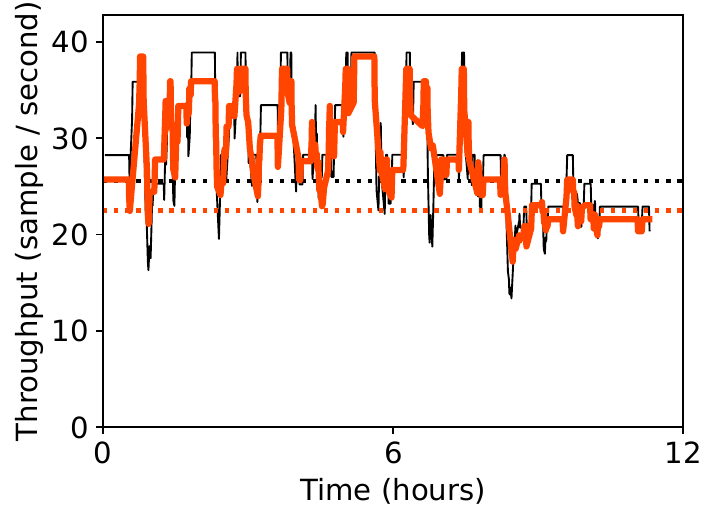}
        \caption{GPT-3 Medium in \texttt{P3}.}
    \end{subfigure}

    \begin{subfigure}[t]{0.5\linewidth}
        \includegraphics[width=\linewidth]{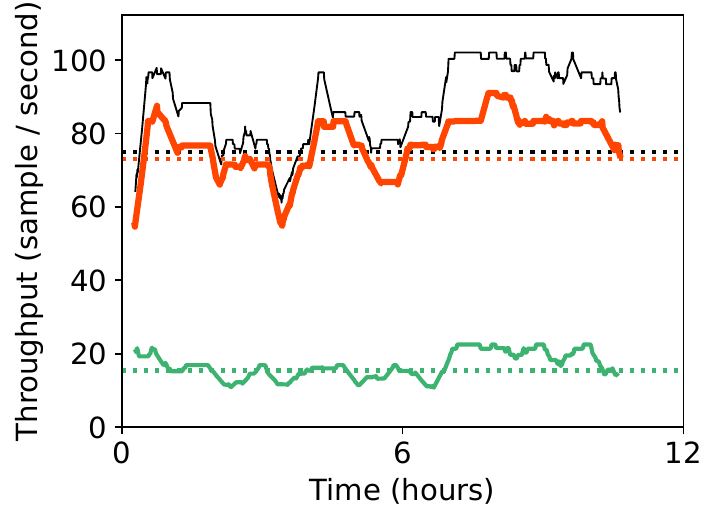}
        \caption{GPT-2 in \texttt{a2-highgpu-1g}.}
    \end{subfigure}%
    \begin{subfigure}[t]{0.5\linewidth}
        \includegraphics[width=\linewidth]{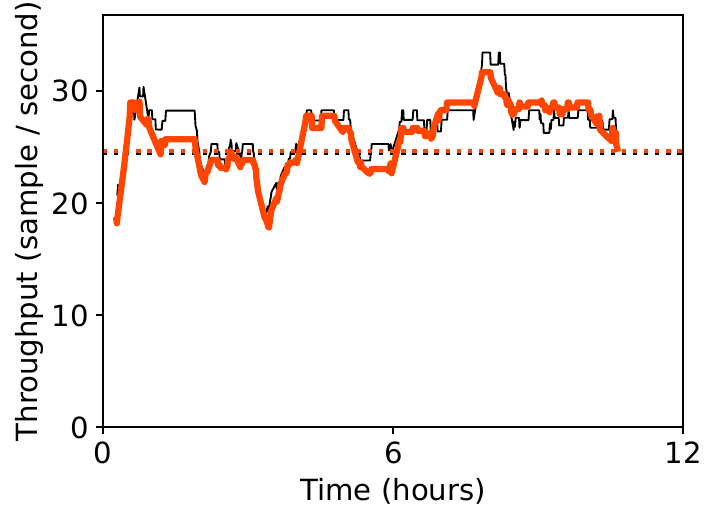}
        \caption{GPT-3 Medium in \texttt{a2-highgpu-1g}.}
    \end{subfigure}

    \caption{GPT-2 and GPT-3 medium throughput changes in Amazon EC2 \texttt{P3} (top) and Google \texttt{a2-highgpu-1g} (bottom) instances.}
    \label{fig:apdx_throughput}
\end{figure}

Figure~\ref{fig:apdx_throughput} shows throughput of unpresented models in the paper due to lack of space, running on Amazon EC2 \texttt{P3} spot instances and Google \texttt{a2-highgpu-1g} spot instances.
Varuna could avoid fallback overhead by successfully checkpointing ahead of preemption in small models, (\eg, BERT-large, GPT-2, and GPT-3 medium), thus Varuna throughput matches \name on average.
However, in large models, \name outperforms it.
Note that lines are smoothed for visibility and do not precisely represent throughput.
Varuna has more spikes down to 0 throughput in reality thus has less throughput.

\end{document}